\crefname{hypothesis}{Hypothesis}{Hypotheses}
\def\<{\leqslant}           
\def\>{\geqslant}           
\def\d{\partial}
\def\wt{\widetilde}
\def\Re{\mathrm{Re}}   
\def\Im{\mathrm{Im}}   
\def\mR{\mathbb{R}}    
\def\mC{\mathbb{C}}    
\def\Tr{\mathrm{Tr}}       
\def\rT{\mathrm{T}}        
\def\rF{\mathrm{F}}        
\def\diam{\diamond}       
\def\bE{\mathbf{E}}    
\def\[[[{[\![\![}
\def\]]]{]\!]\!]}
\def\bra{\langle}
\def\ket{\rangle}
\def\re{\mathrm{e}}        
\def\rd{\mathrm{d}}        
\def\x{\times}
\def\ox{\otimes}
\def\fF{\mathfrak{F}}
\def\fH{\mathfrak{H}}
\def\cZ{\mathcal{Z}}
\def\fZ{\mathfrak{Z}}
\def\bzero{\mathbf{0}}
\def\cX{\mathcal{X}}
\def\cG{\mathcal{G}}
\def\cI{\mathcal{I}}
\def\cov{\mathbf{cov}}
\def\eps{\epsilon}
\def\Ups{\Upsilon}
\def\ups{\upsilon}
\def\lprod{\mathop{\overleftarrow{\prod}}}
\def\lexp{\mathop{\overleftarrow{\exp}}}
\DeclareMathAlphabet{\mathbfit}{OML}{cmm}{b}{it}
\title{Moment Dynamics 
and 
Observer Design 
for A Class of Quasilinear Quantum Stochastic Systems\thanks{This work is supported by the Australian Research Council under grant DP180101805 and the Office of Naval Research under grant N62909-19-1-2129.
}}
\author{Igor G. Vladimirov$^{\dagger}$, \quad Ian R. Petersen
\thanks{Research School of Electrical, Energy and Materials Engineering, College of Engineering and Computer Science, Australian National University, Canberra, ACT 2601, Australia,
  \email{igor.g.vladimirov@gmail.com, i.r.petersen@gmail.com}.
  }
  }
\begin{document}

\maketitle

\begin{abstract}
This paper is concerned with a class of open quantum systems whose dynamic  variables have an algebraic structure, similar to that of the Pauli matrices pertaining to finite-level systems. The system interacts with external bosonic fields,  and its Hamiltonian and coupling operators depend linearly on the system variables. This results in a Hudson-Parthasarathy quantum stochastic differential equation (QSDE) whose drift and dispersion terms are affine and linear functions of the system variables. The quasilinearity of the QSDE leads to tractable dynamics of mean values and higher-order multi-point moments of the system variables driven by vacuum input fields. This allows  for the closed-form computation of  the quasi-characteristic function  of the invariant quantum state of the system and infinite-horizon asymptotic growth rates for a class of cost functionals. The tractability of the moment dynamics is also used for mean square optimal Luenberger observer design in a measurement-based filtering problem for a quasilinear quantum plant, which leads to a Kalman-like quantum filter.
\end{abstract}

\begin{keywords}
Quasilinear quantum stochastic system,
algebraic structure,
moment dynamics,
invariant
state,
quantum filtering.

\end{keywords}

\begin{AMS}
81S22, 
81S25, 
81P16, 
81R15,  
93B28, 
81Q93,   	
60G35,   	
93E11.    	

\end{AMS}

\section{Introduction}

The present paper is concerned with a stochastic control theoretic approach to a class of open quantum systems,  
such as the electron spin interacting with a quantised  electromagnetic field (for example, nonclassical light from a laser). The analysis and synthesis of such systems with desired properties are  important for quantum communication, information and computing technologies \cite{NC_2000} and other modern engineering developments which employ quantum mechanical resources of light-matter interaction. The Hudson-Parthasarathy calculus \cite{HP_1984,P_1992} provides a paradigm for modelling such systems in the form of quantum stochastic differential equations (QSDEs). These equations describe the Heisenberg evolution of the quantum system variables and are   driven by quantum Wiener processes which represent  the external bosonic fields. A particular form of the dynamics of the system and field variables (as time-varying operators on a Hilbert space) depends on their algebraic properties and also on the structure of the Hamiltonian and coupling operators (as functions of the system variables) which specify the energetics of the system and its interaction with the surroundings.

In this work, we consider a class of quantum stochastic systems whose variables have an algebraic structure,    similar to that of the Pauli matrices \cite{S_1994} which play an important role in the theory of quantum angular momentum and other quantum mechanical models (such as finite-level systems). The system Hamiltonian and the system-field coupling operators are assumed to be linear and affine functions of the system variables, to which more complicated  nonlinearities in the energy operators reduce due to the algebraic structure being considered.  This setting leads to QSDEs whose drift vector depends affinely and the dispersion matrix depends linearly on the system variables, which is similar to classical SDEs with a multiplicative noise \cite{W_1967}. Such QSDEs differ from linear QSDEs for open quantum harmonic oscillators \cite{NY_2017,P_2017} and are ``quasilinear'' in the sense that their right-hand side   involves a bilinear dependence on the system and field variables. Nevertheless, the  solutions of these quasilinear  QSDEs admit a representation in terms of time-ordered operator exponentials which play the role of fundamental solutions. Despite being more complicated than the standard matrix exponentials, these fundamental solutions  lend themselves to effective quantum averaging in the case when the external fields are in the vacuum quantum state \cite{P_1992}. In combination with the underlying algebraic structure, this makes mean values, covariances and a wide class of one-point nonlinear moments of the system variables amenable to practical computation along with their multi-point mixed moments.
Under a stability condition,  the tractability of the dynamics of moments allows their limit values to be computed in closed form along with the quasi-characteristic function \cite{CH_1971}  of the invariant quantum state of the system and infinite-horizon asymptotic growth rates for a class of integral cost functionals. The first and second-order moment dynamics are used in a quantum filtering \cite{B_1983,B_1989,BVJ_2007}  problem, where  the output of the quantum system is converted by a measuring device to a  self-commuting multivariable  observation process, which drives a classical SDE for a linear Luenberger observer. This observer produces an unbiased estimate of the system variables,  and its gain matrix lends itself to optimisation by a minimum mean square estimation error criterion. The resulting mean square optimal Luenberger observer has the structure of the Kalman filter \cite{AM_1979,LS_2001} and involves a Riccati equation for the estimation error covariance matrix.

Similar quasilinear QSDEs were obtained in \cite{VP_2012c} for a class of quantum anharmonic oscillators whose Hamiltonians and coupling operators are, respectively, cubic and quadratic functions of quantum position and momentum variables. However, the quantum systems,  discussed here, are different and correspond to those in the works \cite{EMPUJ_2012,EMPUJ_2016}, which consider a QSDE with an affine drift and a linear dispersion matrix as a starting point and studies conditions on its coefficients to guarantee that the QSDE  preserves the 
commutation and anticommutation relations, necessary for physical realizability of such a QSDE. However, the present paper starts from particular Hamiltonian and coupling operators,  leading  to QSDEs with affine drifts and linear dispersion matrices which are physically realizable  and preserve the algebraic structure of the system variables by construction. Furthermore, the main focus of this paper is on quantum statistical aspects of such systems and their application to quantum filtering, with a view of extending the quantum adaptation of the method of moments to mean square optimal and more complicated nonquadratic
control problems \cite{DP_2010,WM_2010} for this class of open quantum systems.

The paper is organised as follows.
Section~\ref{sec:var} specifies quantum dynamic variables with an algebraic structure and discusses their boundedness and commutation properties. Section~\ref{sec:QSS} 
obtains a quasilinear QSDE  for the quantum system with a linear Hamiltonian and affine coupling operators and describes its solutions in terms of time-ordered exponentials. 
Section~\ref{sec:mom} discusses the dynamics of mean values and more general nonlinear and multipoint mixed moments of the system variables in the case of vacuum input fields along with the invariant quantum state and cost functional growth rates. 
Section~\ref{sec:filt} applies these results to a mean square optimal filtering problem for the quantum plant using measurement-based  Luenberger observers, including a steady-state regime.
Section~\ref{sec:Pauli} provides an example of stability conditions for quantum plants with the Pauli matrices as initial variables.
Section~\ref{sec:conc} makes concluding remarks.

\section{Quantum variables with an algebraic structure}
\label{sec:var}

We consider a quantum system with $n$ dynamic variables $X_1, \ldots, X_n$ which are time-varying
self-adjoint operators on an underlying  complex separable Hilbert space $\fH$ (their time evolution will be specified in Section~\ref{sec:QSS}). These quantum variables (taken at the same moment of time) are assumed to have an algebraic structure
\begin{equation}
\label{XXX}
    \Xi_{jk}
    :=
    X_j X_k  = \alpha_{jk}\cI_{\fH} + \sum_{\ell=1}^n\beta_{jk\ell} X_\ell,
    \qquad
    j,k=1, \ldots, n,
\end{equation}
where $\alpha:= (\alpha_{jk})_{1\< j,k\< n} \in \mC^{n\x n}$ is a complex matrix and $\beta:= (\beta_{jk\ell})_{1\< j,k,\ell\< n} \in \mC^{n\x n\x n}$ is a complex array, which consist of structure constants, and $\cI_{\fH}$ is the identity operator on $\fH$. Denoted by $\beta_\ell:= (\beta_{jk\ell})_{1\< j,k\< n} \in \mC^{n\x n}$ are ``sections'' of $\beta$, with $\ell =1, \ldots, n$. A vector-matrix form of (\ref{XXX}) is
\begin{equation}
\label{XXX1}
    \Xi:=
    (\Xi_{jk})_{1\< j,k\< n}
    =
    XX^\rT =  \alpha\ox \cI_{\fH} + \beta \cdot X,
\end{equation}
where
\begin{equation}
\label{betaX}
    \beta \cdot X
    :=
    \sum_{\ell = 1}^n
    \beta_\ell  X_\ell
\end{equation}
is an $(n\x n)$-matrix of operators, and the system variables are assembled into the vector
$
    X:= (X_k)_{1\< k \< n}
$. Here, vectors are organised as columns unless indicated otherwise, and the transpose $(\cdot)^{\rT}$ acts on vectors and matrices of operators as if their entries were scalars. Also,
$\ox$ is the tensor product of operators (in particular, the Kronecker product of matrices) or spaces. The matrices $\alpha$ and $\alpha\ox \cI_{\fH}$ will be identified with each other. An alternative representation of the matrix $\beta\cdot X$ in (\ref{betaX}) is
$$
    \beta\cdot X =
    \begin{bmatrix}
      \beta_{\bullet 1 \bullet} X
      &
      \ldots &
      \beta_{\bullet n \bullet} X
    \end{bmatrix},
$$
so that its $k$th column $\beta_{\bullet k \bullet} X = \big(\sum_{\ell = 1}^n\beta_{jk\ell} X_\ell\big)_{1\< j\< n}$ involves another section $\beta_{\bullet k \bullet} := (\beta_{jk\ell})_{1\< j,\ell \< n} \in \mC^{n\x n}$ of the array $\beta$. We will also use a different product of $\beta$ (or a  similar array) with a vector $u \in \mC^n$:
\begin{equation}
\label{diam}
    \beta \diam u
    :=
    \begin{bmatrix}
        \beta_1 u & \ldots & \beta_n u
    \end{bmatrix}
    \in \mC^{n\x n}.
\end{equation}
The products in (\ref{betaX}), (\ref{diam}) are related as
\begin{equation}
\label{cdotdiam}
  (\beta \cdot u)v = \sum_{\ell=1}^n \beta_\ell u_\ell  v
  =
  \sum_{\ell=1}^n \beta_\ell v  u_\ell  = (\beta\diam v)u
\end{equation}
for any   $
  u:=(u_k)_{1\< k\< n},v:=(v_k)_{1\< k\< n} \in \mC^n$, which holds for any $\mC^{n\x n\x n}$-valued array instead of $\beta$ and extends to vectors $u$, $v$ of $n$ quantum variables such that $[u,v^\rT] = 0$.
As
operators  on $\fH$, the entries $\Xi_{jk}$ of the matrix $\Xi$ in (\ref{XXX1}), defined by the first equality in (\ref{XXX}),   satisfy
\begin{equation}
\label{Xi+}
  \Xi_{jk}^\dagger = X_k^\dagger X_j^\dagger = \Xi_{kj}
\end{equation}
in view of the self-adjointness of $X_1, \ldots, X_n$. Here, $(\cdot)^\dagger$ is the operator adjoint  which extends to matrices of operators as the transpose $(\cdot)^\dagger: = ((\cdot)^\#)^\rT$ of the entrywise adjoint $(\cdot)^\#$.   In matrix form, the identities (\ref{Xi+}) are represented as $\Xi^\dagger = \Xi$. In order to guarantee the consistency
\begin{equation}
\label{XX+}
    \overline{\alpha_{jk}} + \sum_{\ell=1}^n\overline{\beta_{jk\ell}} X_\ell
    =
    \alpha_{kj} + \sum_{\ell=1}^n\beta_{kj\ell} X_\ell
\end{equation}
of the relations (\ref{XXX}), (\ref{Xi+})
for all $j,k=1,\ldots, n$, the matrices $\alpha$ and $\beta_1, \ldots, \beta_n$ are assumed to be Hermitian:
\begin{equation}
\label{herm}
  \alpha^* = \alpha,
  \qquad
  \beta_\ell^* = \beta_\ell,
  \qquad
  \ell = 1, \ldots, n
\end{equation}
(with $(\cdot)^*:= {\overline{(\cdot)}}^\rT$ the complex conjugate transpose),
which is equivalent to the symmetry of
the matrices $\Re \alpha$, $\Re \beta_1, \ldots, \Re \beta_n$ and antisymmetry of $\Im \alpha$, $\Im \beta_1, \ldots, \Im \beta_n$.  The equalities (\ref{XX+}) are obtained from (\ref{Xi+})
  by taking the adjoint of the right-hand sides of (\ref{XXX}) and using the self-adjointness of $X_1, \ldots, X_n$.  In fact, (\ref{herm}) is not only sufficient for (\ref{XX+}) but is also necessary if
\begin{equation}
\label{indep}
  {\rm the\ operators}\ \cI_\fH, X_1, \ldots, X_n\ {\rm are\ linearly\ independent}.
\end{equation}
We will now discuss several corollaries of (\ref{XXX}), which are used in what follows.
The algebraic property (\ref{XXX}) allows a quadratic function of the system variables (considered at the same moment of time)   to be reduced to an affine function. More precisely (cf. a similar remark in \cite[paragraph~4 on p.~641]{EMPUJ_2016}),
\begin{align}
\nonumber
    X^\rT R X
    & = \sum_{j,k=1}^n r_{jk}\Xi_{jk}
    = \sum_{j,k=1}^n r_{jk} \Big(\alpha_{jk} + \sum_{\ell=1}^n\beta_{jk\ell} X_\ell\Big)\\
\nonumber
    & =
    \bra R, \alpha\ket_\rF +
    \begin{bmatrix}
      \bra R, \beta_1\ket_\rF &
      \ldots &
      \bra R, \beta_n\ket_\rF
    \end{bmatrix}
    X\\
\label{XRX}
    & =
    \begin{bmatrix}
        \bra R, \alpha\ket_\rF &
      \bra R, \beta_1\ket_\rF &
      \ldots &
      \bra R, \beta_n\ket_\rF
    \end{bmatrix}
    \begin{bmatrix}
      \cI_\fH\\
      X
    \end{bmatrix}
\end{align}
for any real symmetric matrix $R:= (r_{jk})_{1\< j,k\< n} = R^\rT \in \mR^{n\x n}$, where $\bra K, N\ket_\rF:= \Tr(K^* N)$ is the Frobenius  inner product  of real or complex matrices with the Frobenius norm $\|K\|_\rF:= \sqrt{\bra K, K\ket_\rF}$. A similar reduction to an affine function  holds for an arbitrary polynomial of the system variables. This reduction yields an unambiguous result if (\ref{XXX}) is consistent with the associative property $(\xi\eta)\zeta = \xi (\eta\zeta)$ of the algebra of linear operators on $\fH$,  as discussed below, where an important role is played by the condition (\ref{indep}).

\begin{theorem}
\label{th:ass}
The following equalities are sufficient and, under the condition (\ref{indep}), necessary for the
relations (\ref{XXX}) (or (\ref{XXX1})) to be consistent with the associativity of the operator multiplication:
\begin{align}
\label{con1}
    \sum_{\ell = 1}^n
    (\alpha_{\ell s}
    \beta_{jk\ell}
    -
    \alpha_{j\ell }
    \beta_{ks \ell }) & = 0,\\
\label{con2}
        \alpha_{jk} \delta_{rs}
        -
        \alpha_{ks} \delta_{rj}
        +
        \sum_{\ell = 1}^n
        (\beta_{jk\ell} \beta_{\ell s r}
        -
        \beta_{ks\ell} \beta_{j\ell r}) & = 0,
        \qquad
        j,k,s,r=1, \ldots, n,
\end{align}
where $\delta_{jk}$ is the Kronecker delta. \hfill$\square$
\end{theorem}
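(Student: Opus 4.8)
The plan is to exploit the fact that operator multiplication on $\fH$ is associative a priori, so that $(X_jX_k)X_s = X_j(X_kX_s)$ holds as an operator identity for all $j,k,s$; the consistency requirement is then simply that the two sides, when each is reduced to an affine combination of $\cI_\fH, X_1, \ldots, X_n$ via (\ref{XXX}), represent the same operator. Applying the quadratic reduction twice to each bracketing turns the triple product into such an affine combination, and matching these combinations yields (\ref{con1})--(\ref{con2}).

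First I would compute the left-hand side. Reducing the inner product $X_jX_k = \Xi_{jk}$ by (\ref{XXX}) gives $(X_jX_k)X_s = \alpha_{jk}X_s + \sum_{\ell=1}^n \beta_{jk\ell}X_\ell X_s$, and a second application to each $X_\ell X_s = \Xi_{\ell s}$ produces $\alpha_{jk}X_s + \big(\sum_{\ell=1}^n \beta_{jk\ell}\alpha_{\ell s}\big)\cI_\fH + \sum_{\ell,r=1}^n \beta_{jk\ell}\beta_{\ell s r}X_r$. By the mirror-image computation on the right-hand side, reducing $X_kX_s = \Xi_{ks}$ first and then each $X_jX_\ell = \Xi_{j\ell}$, I would obtain $\alpha_{ks}X_j + \big(\sum_{\ell=1}^n \beta_{ks\ell}\alpha_{j\ell}\big)\cI_\fH + \sum_{\ell,r=1}^n \beta_{ks\ell}\beta_{j\ell r}X_r$.

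Next I would collect the difference of the two expansions according to the operators $\cI_\fH$ and $X_1, \ldots, X_n$, rewriting the stray monomials $\alpha_{jk}X_s$ and $\alpha_{ks}X_j$ as the contributions $\alpha_{jk}\delta_{rs}$ and $\alpha_{ks}\delta_{rj}$ to the coefficient of $X_r$. The coefficient of $\cI_\fH$ is then exactly the left-hand side of (\ref{con1}), namely $\sum_{\ell=1}^n(\alpha_{\ell s}\beta_{jk\ell} - \alpha_{j\ell}\beta_{ks\ell})$, and the coefficient of each $X_r$ is exactly the left-hand side of (\ref{con2}). Sufficiency is immediate: if (\ref{con1})--(\ref{con2}) hold, the two affine expansions agree term by term, so the reduction (\ref{XXX}) introduces no contradiction with associativity. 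For necessity I would invoke the linear independence hypothesis (\ref{indep}): two affine combinations of the linearly independent operators $\cI_\fH, X_1, \ldots, X_n$ coincide as operators only if all of their coefficients match, which forces both (\ref{con1}) and (\ref{con2}).

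The calculation is routine, and I expect no genuine analytic obstacle, since the content of the theorem lies entirely in the coefficient-matching rather than in any property of the operators themselves. The only points demanding care are the bookkeeping of dummy summation indices --- in particular renaming $\ell \mapsto r$ after the second substitution so that the free index $r$ labels $X_r$ uniformly on both sides --- and the correct attribution of the affine terms $\alpha_{jk}X_s$ and $\alpha_{ks}X_j$ through the Kronecker deltas $\delta_{rs}$ and $\delta_{rj}$. It is also worth stressing that (\ref{indep}) is essential only for necessity: without linear independence, linear relations among $\cI_\fH, X_1, \ldots, X_n$ could absorb a nonzero coefficient mismatch, so consistency with associativity could hold even when (\ref{con1})--(\ref{con2}) fail.
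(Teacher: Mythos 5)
Your proposal is correct and follows essentially the same route as the paper's proof: expanding $(X_jX_k)X_s$ and $X_j(X_kX_s)$ by applying (\ref{XXX}) twice, matching the coefficients of $\cI_\fH$ and $X_r$ (with the $\alpha_{jk}X_s$, $\alpha_{ks}X_j$ terms absorbed via Kronecker deltas) to obtain (\ref{con1})--(\ref{con2}), and invoking (\ref{indep}) only for necessity. The computations and the coefficient-matching logic coincide with the paper's argument term by term.
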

\begin{proof}
Repeated application of (\ref{XXX}) to computing the product $X_jX_kX_s$ of the system variables in two ways, as $(X_jX_k)X_s$ and $X_j(X_kX_s)$,  leads to
\begin{align}
\nonumber
    (X_jX_k)X_s
    & =
    \Big(
        \alpha_{jk}
        +
        \sum_{\ell = 1}^n
        \beta_{jk\ell}X_\ell
    \Big)
    X_s
    =
    \alpha_{jk}X_s
    +
    \sum_{\ell = 1}^n
    \beta_{jk\ell}
    \Big(
        \alpha_{\ell s}
        +
        \sum_{r=1}^n
        \beta_{\ell s r}
        X_r
    \Big)\\
\label{Xjks1}
    & =
    \sum_{\ell = 1}^n
    \alpha_{\ell s}
    \beta_{jk\ell}
    +
    \sum_{r=1}^n
    \Big(
        \alpha_{jk} \delta_{rs}
        +
        \sum_{\ell = 1}^n
        \beta_{jk\ell} \beta_{\ell s r}
    \Big)
    X_r,\\
\nonumber
    X_j(X_kX_s)
    & =
    X_j
    \Big(
        \alpha_{ks}
        +
        \sum_{\ell = 1}^n
        \beta_{ks\ell}X_\ell
    \Big)
    =
    \alpha_{ks}X_j
    +
    \sum_{\ell = 1}^n
    \beta_{ks\ell}
    \Big(
        \alpha_{j\ell}
        +
        \sum_{r=1}^n
        \beta_{j\ell r}
        X_r
    \Big)\\
\label{Xjks2}
    & =
    \sum_{\ell = 1}^n
    \alpha_{j\ell }
    \beta_{ks \ell }
    +
    \sum_{r=1}^n
    \Big(
        \alpha_{ks} \delta_{rj}
        +
        \sum_{\ell = 1}^n
        \beta_{ks\ell} \beta_{j\ell r}
    \Big)
    X_r .
\end{align}
By comparing the coefficients before $\cI_\fH$, $X_1, \ldots, X_n$ on the right-hand sides of (\ref{Xjks1}), (\ref{Xjks2}), it follows that the fulfillment of (\ref{con1}), (\ref{con2}) leads to $(X_jX_k)X_s = X_j(X_kX_s)$ for all $j,k,s=1, \ldots, n$ in accordance with the associative property of the operator multiplication, thus proving the sufficiency. Under the condition (\ref{indep}),   the necessity of  (\ref{con1}), (\ref{con2}) for the consistency of (\ref{XXX}) with the associativity follows from the implication $c_0 + \sum_{r=1}^n c_r X_r = 0 \Longrightarrow c_0 = \ldots = c_n = 0$ for any $c_0, \ldots, c_n \in \mC$ applied to $(X_jX_k)X_s-X_j(X_kX_s) = 0$ as the difference between  the right-hand sides of (\ref{Xjks1}), (\ref{Xjks2}).
\end{proof}

The above proof employs reduction of degree three monomials in the system variables to affine functions. The conditions (\ref{con1}), (\ref{con2}), which are organised as quadratic constraints on the structure constants, are assumed to be satisfied in what follows.

As discussed in the following theorem, the algebraic property (\ref{XXX}) also implies that the quantum variables $X_1, \ldots, X_n$ are  bounded operators on the Hilbert space $\fH$.

\begin{theorem}
\label{th:bound}
Under the conditions (\ref{XXX1}), (\ref{herm}), the induced norms of the system variables satisfy
\begin{equation}
\label{Xnorm}
  \|X_k\| \< \frac{1}{2} |\tau_k| + \gamma,
  \qquad
  k = 1, \ldots, n,
\end{equation}
where
\begin{equation}
\label{tau}
      \tau
      :=
      (\tau_\ell)_{1\< \ell \< n}
      :=
      (\Tr \beta_\ell)_{1\< \ell \< n} \in \mR^n
\end{equation}
is a vector formed from the traces of the Hermitian sections $\beta_1, \ldots, \beta_n$ of the array $\beta$, and
\begin{equation}
\label{rad}
  \gamma := \sqrt{\Tr \alpha + \frac{1}{4}|\tau|^2}.
\end{equation}
\hfill$\square$
\end{theorem}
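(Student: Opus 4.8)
The plan is to reduce the whole statement to a single operator identity obtained by summing the diagonal entries of (\ref{XXX1}), and then to extract the bound on each $X_k$ by a positivity argument.

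First I would restrict (\ref{XXX}) to the diagonal $j=k$, which gives $X_k^2 = \Xi_{kk} = \alpha_{kk}\cI_\fH + \sum_{\ell=1}^n \beta_{kk\ell}X_\ell$ for $k=1,\ldots,n$. Summing over $k$ and recalling $\Tr\alpha = \sum_k\alpha_{kk}$ together with $\tau_\ell = \Tr\beta_\ell = \sum_k\beta_{kk\ell}$ from (\ref{tau}), this collapses to the single relation
\[
    \sum_{k=1}^n X_k^2
    =
    (\Tr\alpha)\cI_\fH + \sum_{\ell=1}^n \tau_\ell X_\ell
    =
    (\Tr\alpha)\cI_\fH + \tau^\rT X.
\]
Here the Hermitian property (\ref{herm}) guarantees that the $\alpha_{kk}$, hence $\Tr\alpha$, are real and that $\tau\in\mR^n$, as already recorded in (\ref{tau}), so all coefficients are genuine real numbers.

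Next I would complete the square termwise. Since each $X_k$ is self-adjoint and each $\tau_k$ is real, the operator $X_k-\tfrac{\tau_k}{2}\cI_\fH$ is self-adjoint, and
\[
    \sum_{k=1}^n \Big(X_k - \tfrac{\tau_k}{2}\cI_\fH\Big)^2
    =
    \sum_{k=1}^n X_k^2 - \tau^\rT X + \tfrac{1}{4}|\tau|^2\cI_\fH
    =
    \Big(\Tr\alpha + \tfrac{1}{4}|\tau|^2\Big)\cI_\fH
    =
    \gamma^2\cI_\fH,
\]
with $\gamma$ as in (\ref{rad}). This is the crux: the left-hand side is a sum of squares of self-adjoint operators, hence positive semi-definite, yet it is forced to equal a scalar multiple of the identity.

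Finally, because each summand $(X_k-\tfrac{\tau_k}{2}\cI_\fH)^2$ is itself positive semi-definite, it is dominated in the operator ordering by the whole sum, so $0 \< (X_k-\tfrac{\tau_k}{2}\cI_\fH)^2 \< \gamma^2\cI_\fH$. For a positive semi-definite operator this bound on the spectrum gives $\big\|(X_k-\tfrac{\tau_k}{2}\cI_\fH)^2\big\|\<\gamma^2$, and the $C^*$-identity $\|B^2\|=\|B\|^2$ for self-adjoint $B$ then yields $\|X_k-\tfrac{\tau_k}{2}\cI_\fH\|\<\gamma$; the triangle inequality $\|X_k\|\<\|X_k-\tfrac{\tau_k}{2}\cI_\fH\|+\tfrac{1}{2}|\tau_k|$ produces (\ref{Xnorm}). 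I expect the only real subtlety to be this positivity step, namely arguing that a positive semi-definite summand of a sum equal to $\gamma^2\cI_\fH$ is itself $\<\gamma^2\cI_\fH$; as a byproduct it also forces $\Tr\alpha+\tfrac14|\tau|^2\>0$, so that $\gamma$ in (\ref{rad}) is real, a consistency condition automatically enforced by the self-adjointness of the $X_k$ when $\fH\neq\{0\}$.
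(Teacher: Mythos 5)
Your proof is correct, but it takes a genuinely different route from the paper's. The paper averages (\ref{XXX1}) over an arbitrary quantum state first: the covariance matrix $\cov(X)=\alpha+\beta\cdot\mu-\mu\mu^\rT$ in (\ref{covX}) is positive semi-definite, so its trace is nonnegative, and completing the square in the \emph{mean vector} gives $\big|\mu-\frac{1}{2}\tau\big|\<\gamma$ as in (\ref{pos}); the operator bound is then recovered from the variational formula $\|X_k\|=\sup_\rho|\Tr(\rho X_k)|$ over density operators. You instead complete the square at the \emph{operator} level: summing the diagonal of (\ref{XXX}) yields the exact identity $\sum_{k=1}^n\big(X_k-\frac{1}{2}\tau_k\cI_\fH\big)^2=\gamma^2\cI_\fH$, and positivity of each summand does the rest. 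Your route is more elementary (no density operators, no variational norm formula) and produces a stronger structural fact: a sphere-like operator identity for the shifted variables, which upon state-averaging recovers the paper's inequality (\ref{pos}) via $(\bE\zeta)^2\<\bE(\zeta^2)$, and which makes the nonnegativity of $\Tr\alpha+\frac{1}{4}|\tau|^2$ transparent. What the paper's state-based argument buys is the ball statement for the mean vector $\mu$ itself, which the text reuses right after the theorem (the translation-invariance discussion of $\gamma$) and which is of a piece with the moment machinery of Section~\ref{sec:mom}. One small point in your last step: invoking the $C^*$-identity $\|B^2\|=\|B\|^2$ presupposes that $B$ is bounded, which is part of what is being proved; this is harmless, since for any unit vector $\psi$ in the common domain one has $\big\langle\psi,\big(X_k-\tfrac{1}{2}\tau_k\cI_\fH\big)^2\psi\big\rangle=\big\|\big(X_k-\tfrac{1}{2}\tau_k\cI_\fH\big)\psi\big\|^2\<\gamma^2$, which gives $\big\|X_k-\tfrac{1}{2}\tau_k\cI_\fH\big\|\<\gamma$ directly and establishes boundedness without circularity.
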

\begin{proof}
Consider the quantum covariance matrix of the system variables, which is a complex positive semi-definite Hermitian matrix
\begin{equation}
\label{covX}
    \cov(X):= \bE \Xi-\mu\mu^\rT = \alpha + \beta\cdot \mu - \mu\mu^\rT,
\end{equation}
obtained by averaging (\ref{XXX1}). Here,
\begin{equation}
\label{mu}
      \mu:= (\mu_k)_{1\< k\< n}:= \bE X \in \mR^n
\end{equation}
is the mean vector, and
\begin{equation}
\label{bE}
    \bE \zeta := \Tr(\rho\zeta)
\end{equation}
is the quantum expectation \cite{H_2001} over a density operator $\rho$ which is a positive semi-definite self-adjoint operator  on $\fH$ of unit trace $\Tr \rho =1$. Since the matrix  $\cov(X)\succcurlyeq 0$ has a nonnegative trace, then (\ref{tau}), (\ref{covX}) imply that
\begin{equation}
\label{pos2}
    \Tr \alpha + \tau^\rT \mu  - |\mu|^2 \> 0 ,
\end{equation}
where $\Tr \alpha$, $\tau_1, \ldots, \tau_n$ are real-valued   as the traces of Hermitian matrices in (\ref{herm}). By completion of the square, (\ref{pos2}) is equivalent to
\begin{equation}
\label{pos}
    \big|\mu-\frac{1}{2}\tau\big|^2 \< \Tr \alpha + \frac{1}{4}|\tau|^2,
\end{equation}
whereby the right-hand side is nonnegative, thus giving rise to (\ref{rad}).
By the triangle inequality, (\ref{pos}) implies that
\begin{equation}
\label{muabs}
    |\mu_k|
    \<
    \frac{1}{2}|\tau_k|
    +
    \big|\mu_k-\frac{1}{2}\tau_k\big|
    \<
    \frac{1}{2}|\tau_k| + \gamma,
    \qquad
    1\< k \< n.
\end{equation}
Now, $\|\zeta\| = \sup_{\rho} |\Tr (\rho\zeta)|$ for any self-adjoint operator $\zeta$, where the supremum is over all possible density operators $\rho$ on $\fH$ and can be reduced to pure states $\rho = |\psi\ket \bra \psi|$, represented using the quantum mechanical bra-ket notation \cite{S_1994}, where $\psi$ is an element of $\fH$ of unit norm (so that,  $\Tr \rho = \bra \psi\mid\psi\ket = 1$). Therefore, since the right-hand side of (\ref{muabs}) does not depend on $\rho$, the maximization of its left-hand side as $\|X_k\| = \sup_{\rho}|\mu_k|$ establishes (\ref{Xnorm}).
\end{proof}

The quantity $\gamma$ in (\ref{rad}), which specifies the radius $\gamma$ of a ball of centre $\frac{1}{2}\tau$ containing the mean vector $\mu$ from (\ref{mu}) in view of (\ref{pos}), 
is invariant under translations of the system variables. More precisely, consider self-adjoint quantum variables $\wt{X}_1, \ldots, \wt{X}_n$ obtained by using a ``shift'' vector $\varphi:=(\varphi_k)_{1\< k \< n} \in \mR^n$ as
$$
    \wt{X}:= (\wt{X}_k)_{1\< k \< n} := X + \varphi.
$$
These new variables inherit an algebraic structure from the original system variables $X_1, \ldots, X_n$:
\begin{align*}
    \wt{X}\wt{X}^\rT
    & =
    \Xi + X\varphi^\rT + \varphi X^\rT + \varphi\varphi^\rT
    =
    \alpha + \beta\cdot X + X\varphi^\rT + \varphi X^\rT + \varphi\varphi^\rT\\
    & =
    \alpha - \beta\cdot \varphi -\varphi\varphi^\rT
    + \beta\cdot \wt{X} + \wt{X}\varphi^\rT + \varphi \wt{X}^\rT
=
    \wt{\alpha} + \wt{\beta}\cdot \wt{X},
\end{align*}
which is similar to (\ref{XXX1}). Here, the appropriately modified structure  constants in $\wt{\alpha}$ and $\wt{\beta}:= (\wt{\beta}_{jk\ell})_{1\< j,k,\ell\< n}$  are given by
\begin{equation}
\label{newab}
    \wt{\alpha}= \alpha - \beta\cdot \varphi - \varphi\varphi^\rT,
    \qquad
    \wt{\beta}_{jk\ell} = \beta_{jk\ell} + \varphi_k \delta_{j\ell}+ \varphi_j \delta_{k\ell},
    \qquad
    1\<  j,k,\ell\< n.
\end{equation}
Hence, the entries of the vector $\tau$ in (\ref{tau}) are modified as  $\wt{\tau}_\ell = \sum_{j=1}^n \wt{\beta}_{jj\ell} = \tau_\ell + 2\varphi_\ell$, so that
\begin{equation}
\label{taut}
      \wt{\tau}
      :=
      (\wt{\tau}_\ell)_{1\< \ell \< n}
      =
      \tau + 2\varphi.
\end{equation}
From (\ref{newab}), (\ref{taut}), it now follows that the quantity (\ref{rad}) indeed remains unchanged:
\begin{align*}
    \Tr \wt{\alpha} + \frac{1}{4}|\wt{\tau}|^2
    & =
    \Tr (\alpha - \beta\cdot \varphi - \varphi\varphi^\rT) + \frac{1}{4}|\tau + 2\varphi|^2\\
    & =
    \Tr \alpha
    -\tau^\rT \varphi - |\varphi|^2
    + \frac{1}{4}|\tau|^2+\tau^\rT \varphi + |\varphi|^2
    =
    \Tr \alpha + \frac{1}{4}|\tau|^2.
\end{align*}
Another corollary of the algebraic properties (\ref{XXX}), (\ref{herm}) is provided by the canonical commutation relations
(CCRs)
  \begin{equation}
  \label{XCCR}
    [X_j,X_k]
     =
     \Xi_{jk} - \Xi_{kj}
      = 2i \Im \Xi_{jk}
      =
    2i \Big(\Im \alpha_{jk}  + \sum_{\ell=1}^n(\Im \beta_{jk\ell}) X_\ell\Big),
\end{equation}
where $[a,b]:= ab-ba$
is the commutator of linear operators, use is made of (\ref{Xi+}), and the imaginary part $\Im(\cdot)$ is extended from complex numbers to quantum variables as $\Im \zeta : = \frac{1}{2i}(\zeta - \zeta^\dagger)$. Similarly to (\ref{XXX1}), the CCRs (\ref{XCCR}) are represented in vector-matrix form as
  \begin{equation}
  \label{XCCR1}
    [X,X^\rT]
     :=
    ([X_j,X_k])_{1\< j,k\< n}
    =
    2i (\Im \alpha + (\Im\beta)\cdot X).
  \end{equation}
The operator algebra associativity conditions (\ref{con1}), (\ref{con2}) on $\alpha$, $\beta$ imply the Jacobi identities \cite{D_2006}
\begin{equation*}
\label{Jac1}
    [[X_j,X_k],X_\ell] + [[X_k,X_\ell],X_j] + [[X_\ell,X_j],X_k]  = 0,
    \qquad
    j,k,\ell = 1, \ldots, n.
\end{equation*}
For simplicity, in what follows, $\alpha$ is assumed to be a real symmetric matrix of order $n$,
\begin{equation}
\label{Imalpha0}
    \Im \alpha = 0,
\end{equation}
and hence, the CCRs (\ref{XCCR1}) reduce to
\begin{equation}
\label{XCCRTheta}
    [X,X^\rT]
    =
    2i \Theta \cdot X
    =
    2i
    \sum_{\ell=1}^n
    \Theta_\ell X_\ell.
\end{equation}
Here,
\begin{equation}
\label{Theta}
    \Theta : = (\theta_{jk\ell})_{1\< j,k,\ell \< n}:=  \Im \beta
\end{equation}
is a real $(n\x n\x n)$-array whose sections $\Theta_\ell:= (\theta_{jk\ell})_{1\< j,k\< n} = \Im \beta_\ell\in \mR^{n\x n}$ are antisymmetric matrices for all $\ell = 1, \ldots, n$ in view of (\ref{herm}).

An example of $n=3$ quantum variables, satisfying (\ref{XXX}), with (\ref{herm}),  (\ref{Imalpha0}), is provided by the Pauli matrices \cite{S_1994}
\begin{equation}
\label{X123}
    \sigma_1:=
    \begin{bmatrix}
      0 & 1\\
      1 & 0
    \end{bmatrix},
    \qquad
    \sigma_2:=
    \begin{bmatrix}
      0 & -i\\
      i & 0
    \end{bmatrix},
    \qquad
    \sigma_3:=
    \begin{bmatrix}
      1 & 0\\
      0 & -1
    \end{bmatrix},
\end{equation}
which, together with the identity matrix $I_2$ of order $2$, form a basis in the four-dimensional real space of self-adjoint operators (complex Hermitian $(2\x 2)$-matrices) on the Hilbert space $\fH:= \mC^2$.
In this case, $\alpha = I_3$ and $\beta = i \Theta$, where (\ref{Theta}) is an array $\Theta \in\{0, \pm1\}^{3\x 3\x 3}$ specified by the Levi-Civita symbol $\theta_{jk\ell} = \eps_{jk\ell}$, so that
$\Re \beta = 0$ and
\begin{equation}
\label{T123}
    \Theta_1 =
    \begin{bmatrix}
     0&      0  &   0\\
     0&     0   &  1\\
     0&    -1   &  0
    \end{bmatrix},
    \qquad
    \Theta_2 =
    \begin{bmatrix}
     0&      0  &   -1\\
     0&     0   &  0\\
     1&    0   &  0
    \end{bmatrix},
    \qquad
    \Theta_3 =
    \begin{bmatrix}
     0&      1  &   0\\
     -1&     0   &  0\\
     0&    0   &  0
    \end{bmatrix}.
\end{equation}
Note that finite-dimensional Hilbert spaces correspond to finite-level quantum systems such as the electron spin in an electromagnetic field (the physical setting from which the Pauli matrices originate).

\section{Quasilinear quantum stochastic dynamics}
\label{sec:QSS}

The open quantum system being considered interacts with a multichannel external bosonic field. This input field is modelled by an even number $m$ of quantum Wiener processes $W_1(t), \ldots, W_m(t)$ which are time-varying self-adjoint operators on a symmetric Fock space $\fF$ with a filtration $(\fF_t)_{t\> 0}$. These processes are assembled into a vector $W: = (W_k)_{1\< k\< m}$ whose future-pointing increments satisfy the quantum Ito relations
\begin{equation}
\label{dWdW_Omega_J_bJ}
    \rd W\rd W^{\rT}
    :=
    \Omega \rd t,
    \qquad
    \Omega
    :=
    (\omega_{jk})_{1\< j,k\< m}
    =
    I_m + iJ,
    \qquad
        J
        :=
       {\begin{bmatrix}
           0 & I_{m/2}\\
           -I_{m/2} & 0
       \end{bmatrix}}.
\end{equation}
As opposed to the identity diffusion matrix $I_m$ of the standard Wiener process \cite{KS_1991} in $\mR^m$, the quantum Ito matrix $\Omega$ is  a complex positive semi-definite Hermitian matrix. Its  imaginary part $\Im \Omega=J$  is an orthogonal antisymmetric matrix (so that $J^2 =-I_m$). This property is related to the fact that the quantum Wiener processes  $W_1, \ldots, W_m$ do not commute with each other  and have a nonzero two-point commutator matrix
\begin{equation}
\label{WWst}
    [W(s), W(t)^{\rT}]
     = 2i\min(s,t)J ,
    \qquad
    s,t\>0.
\end{equation}
The system-field interaction 
produces the output fields $Y_1(t), \ldots, Y_m(t)$ 
which are time-varying self-adjoint operators on the tensor-product Hilbert space $\fH:= \fH_0 \ox \fF$ as a common domain for the system and field operators, where $\fH_0$ is a Hilbert space for the action of the initial system variables $X_1(0), \ldots, X_n(0)$.
The Heisenberg evolution  of the vectors $X$ and     $Y   :=
    (Y_k)_{1\< k\< m}$ of the system variables and the output field variables is governed
by the Markovian Hudson-Parthasarathy QSDEs
\cite{HP_1984,P_1992}
\begin{align}
\label{dX}
    \rd X
    & =
    \cG(X)\rd t  - i[X,L^{\rT}]\rd W,\\
\label{dY}
  \rd Y & = 2JL\rd t + \rd W.
\end{align}
Here, the vector $L:= (L_k)_{1\< k \< m}$ is formed from self-adjoint system-field coupling operators $L_1, \ldots, L_m$ 
 on the space $\fH$, and hence, the dispersion $(n\x m)$-matrix $-i [X, L^{\rT}]$ also consists of self-adjoint operators on $\fH$.
The
drift vector
$    \cG(X)
$
in the QSDE (\ref{dX}) is obtained by the entrywise application of
the Gorini-Kossakowski-Sudar\-shan-Lindblad (GKSL) generator    \cite{GKS_1976,L_1976}, which  acts on a system operator $\xi$ (a function of the system variables)  as
\begin{equation}
\label{cG}
\cG(\xi)
   := i[H,\xi]
     +
     \frac{1}{2}
    \sum_{j,k=1}^m
    \omega_{jk}
    ( [L_j,\xi]L_k + L_j[\xi,L_k]).
\end{equation}
The system Hamiltonian $H$, which is also a self-adjoint operator on $\fH$, and the coupling operators $L_1, \ldots, L_m$ are functions (for example, polynomials) of the system variables $X_1, \ldots, X_n$ and inherit time dependence from them.
The superoperator $\cG$ is a quantum analogue of the infinitesimal generators of classical Markov processes \cite{KS_1991} 
and specifies the drift of the QSDE
\begin{equation}
\label{dxi}
    \rd \xi
    =
    \cG(\xi)\rd t - i[\xi,L^{\rT} ]\rd W,
\end{equation}
with $\xi(0)$ acting on the initial space $\fH_0$.
The structure of the generator (\ref{cG}) and the diffusion term in (\ref{dxi}), which are specified by the energy operators $H$, $L_1, \ldots, L_m$,   originate from the evolution of the system operator $\xi$:
\begin{equation}
\label{xiuni}
    \xi(t)
    =
    U(t)^{\dagger} (\xi(0)\ox \cI_{\fF}) U(t),
    \qquad
    t\> 0.
\end{equation}
Here, $U(t)$ is a unitary operator, which captures the system-field interaction over the time interval $[0,t]$, acts effectively on the subspace
\begin{equation}
\label{fHt}
    \fH_t:= \fH_0\ox \fF_t
\end{equation}
of the system-field space $\fH$  and satisfies the QSDE
\begin{equation}
\label{dU}
    \rd U(t)
    =
    -U(t) \Big(i(H(t)\rd t + L(t)^{\rT} \rd W(t)) + \frac{1}{2}L(t)^{\rT}\Omega L(t)\rd t\Big),
\end{equation}
with the initial condition $U(0)=\cI_{\fH}$. 
Under the quantum stochastic flow (\ref{xiuni}), the system variables evolve as
\begin{equation}
\label{uni}
    X(t)
    =
    U(t)^{\dagger} (X(0)\ox \cI_{\fF}) U(t),
\end{equation}
while (\ref{dY})  corresponds to the action of the flow on the output field variables in a   modified form:
\begin{equation}
\label{Y}
    Y(t)
    =
    U(t)^{\dagger}(\cI_{\fH_0}\ox W(t))U(t).
\end{equation}
Since $ \zeta\mapsto U(t)^{\dagger} \zeta U(t)$ is a unitary similarity transformation of  operators $\zeta$ on the system-field space $\fH$, 
the flow (\ref{uni}) preserves the algebraic property (\ref{XXX1}) and the structure constants:
\begin{align}
\nonumber
    \Xi(t)
    & =
    U(t)^{\dagger} (X(0)\ox \cI_{\fF})
    \overbrace{U(t)
    U(t)^{\dagger}}^{\cI_\fH} (X(0)^\rT\ox \cI_{\fF}) U(t)    \\
\nonumber
    & = U(t)^{\dagger} (\Xi(0)\ox \cI_{\fF}) U(t)
    =
    U(t)^{\dagger} ((\alpha + \beta \cdot X(0))\ox \cI_{\fF}) U(t)    \\
\label{XXuni}
    & =
    \alpha +
    \beta\cdot
    U(t)^{\dagger} (X(0)\ox \cI_{\fF}) U(t) = \alpha + \beta\cdot X(t),
    \qquad
    t\> 0.
\end{align}
The algebraic structure preservation (\ref{XXuni}) employs only the unitary nature of the quantum stochastic flow (\ref{xiuni}) and holds regardless of a particular form of the Hamiltonian $H$ and the coupling operators $L_1, \ldots, L_m$ which specify the QSDEs (\ref{dX}), (\ref{dxi}) involving the generator (\ref{cG}).

Furthermore, irrespective of a particular form of (\ref{dU}),  the unitary similarity transformation in (\ref{uni}), (\ref{Y}) preserves the commutativity between future system variables and past output field variables:
\begin{equation}
\label{XY}
        [X(t),Y(s)^{\rT}]
     =
    0,
    \qquad
    t\> s\> 0.
\end{equation}
However,
the output field variables $Y_1, \ldots, Y_m$ do not commute with each other 
and, in view of (\ref{dY}),
inherit the two-point CCRs (\ref{WWst}) from the input fields:
\begin{equation}
\label{YYst}
    [Y(s), Y(t)^{\rT}] = 2i\min(s,t)J,
    \qquad
    s,t\>0.
\end{equation}

In what follows, due to the reduction of polynomial (and more general) functions of the system variables to affine functions, mentioned in Section~\ref{sec:var}, we will be concerned with the case of \cite[Theorem 6.1]{EMPUJ_2016} when the energy operators $H$, $L_1, \ldots, L_m$ 
are affine functions of the system variables $X_1, \ldots, X_n$:
\begin{equation}
\label{H_LM}
    H
    =
    E^\rT X,
    \qquad
    L
    =
    MX + N,
\end{equation}
specified by an energy parameter $E\in \mR^n$ and coupling parameters $M \in \mR^{m\x n}$, $N \in \mR^m$.  An additive constant term for the Hamiltonian $H$ is omitted because $H$ enters the GKSL generator (\ref{cG}) only through the commutator, which makes the contribution  of such a constant vanish.
Although the assumption (\ref{H_LM}) is irrelevant for the validity of 
(\ref{XXuni}), 
its significance is that it leads to a quasilinear QSDE for the system variables with tractable moment dynamics (cf. \cite[Section~5]{VP_2012c}), which will be discussed in the subsequent sections. The following theorem, which  is similar to \cite[Lemma~4.2 and Theorem~6.1]{EMPUJ_2016}, is provided here for completeness along with a self-contained proof.

\begin{theorem}
\label{th:QSDE}
The QSDEs (\ref{dX}), (\ref{dY}) for the open quantum system with the Hamiltonian and coupling operators (\ref{H_LM}) and the dynamic variables satisfying (\ref{XXX}) along with (\ref{herm}), (\ref{Imalpha0}), take the form
\begin{align}
\label{dX1}
  \rd X  & = (AX + b) \rd t + B(X)\rd W,\\
\label{dY1}
  \rd Y & = 2J(MX+N)\rd t + \rd W.
\end{align}
Here, $A \in \mR^{n\x n}$, $b \in \mR^n$ are a matrix and a vector of coefficients, and $B(X)$ is an $(n\x m)$-matrix  of self-adjoint operators, which depend linearly on the system variables:
\begin{align}
\label{A}
    A
    & :=
        2
        \Theta \diam (E + M^\rT JN)
    +
    2
    \sum_{\ell = 1}^n
    \Theta_\ell
    M^\rT
    (
        M\Theta_{\ell\bullet \bullet}
        +
        J M\Re \beta_{\ell\bullet \bullet}
    ),\\
\label{b}
    b
    & :=
    2
    \sum_{\ell = 1}^n
    \Theta_\ell
    M^\rT
    JM\alpha_{\bullet \ell},\\
\label{BX}
    B(X)
    & := 2(\Theta \cdot X)M^\rT,
\end{align}
where the array $\Theta$ is given by (\ref{Theta}), and use is made of its product (\ref{diam}) with the vector $E + M^\rT J N \in \mR^n$. 
 \hfill$\square$
\end{theorem}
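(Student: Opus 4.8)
The plan is to substitute the affine energy operators (\ref{H_LM}) into the QSDEs (\ref{dX}), (\ref{dY}) and then reduce every resulting expression to an affine function of $X$ by means of the commutation relations (\ref{XCCRTheta}) and the algebraic structure (\ref{XXX}). The output equation (\ref{dY1}) is immediate: substituting $L = MX+N$ into (\ref{dY}) yields $\rd Y = 2J(MX+N)\rd t + \rd W$. The dispersion term is almost as direct. Since the entries of $N$ are scalar multiples of $\cI_\fH$, we have $[X, L^\rT] = [X, X^\rT]M^\rT$, and (\ref{XCCRTheta}) then gives $-i[X, L^\rT] = -i\cdot 2i(\Theta\cdot X)M^\rT = 2(\Theta\cdot X)M^\rT = B(X)$, establishing (\ref{BX}).

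The substance of the proof is the computation of the drift $\cG(X)$ from (\ref{cG}), applied entrywise to $X$. I would split $\cG(X) = i[H, X] + \cD(X)$, where $\cD$ denotes the dissipator (the double sum). For the Hamiltonian part, $H = E^\rT X$ gives $i[H, X] = -i[X, X^\rT]E = 2(\Theta\cdot X)E$, and the identity (\ref{cdotdiam}) applied to the array $\Theta$ rewrites this as $2(\Theta\diam E)X$. This accounts for the $E$-contribution $2\Theta\diam E$ to $A$ in (\ref{A}).

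For the dissipator, the affinity $L_j = M_{j\bullet}X + N_j$ reduces each commutator to $[L_j, X] = \sum_a M_{ja}[X_a, X]$, and (\ref{XCCRTheta}) expresses $[X_a, X_p]$ as $2i\sum_\ell \theta_{ap\ell}X_\ell$. The remaining factors $X_\ell L_k$ and $L_j X_\ell$ then contain quadratic monomials $X_\ell X_c$, which I would collapse to affine functions via (\ref{XXX}), writing $X_\ell X_c = \alpha_{\ell c}\cI_\fH + \sum_r\beta_{\ell c r}X_r$. At this stage one substitutes $\omega_{jk} = \delta_{jk} + iJ_{jk}$ from (\ref{dWdW_Omega_J_bJ}) and $\beta = \Re\beta + i\Theta$ from (\ref{Theta}), and collects terms by their source: the $\alpha$-terms (proportional to $\cI_\fH$) must assemble into the constant vector $b$ of (\ref{b}), the $\beta$-terms into the remaining part of $A$, and the $N_k X_\ell$ terms, paired with the $iJ$ part of $\omega$, into the shift $M^\rT JN$ inside $2\Theta\diam(E + M^\rT JN)$.

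The main obstacle is the bookkeeping in this last collection step, specifically showing that $A$ and $b$ come out real. A priori the expansion produces both real and imaginary coefficients, arising from the four ways of pairing the real and imaginary parts of $\omega_{jk}$ with those of $\beta_{\ell c r}$ (and with the real $\alpha_{\ell c}$, using $\Im\alpha = 0$). The self-adjointness built into the GKSL generator should force the imaginary contributions to cancel: concretely, one exploits the antisymmetry of the sections $\Theta_\ell$ and of $J$ against the symmetry of $\Re\beta_\ell$ and of $I_m$, together with the fact that the two terms $[L_j, X]L_k$ and $L_j[X, L_k]$ enter $\cD$ in a symmetric fashion. What survives is the $\delta_{jk}$ part of $\omega$ paired with $\Im\beta = \Theta$, yielding the $\Theta_\ell M^\rT M\Theta_{\ell\bullet\bullet}$ term, and the $iJ$ part paired with $\Re\beta$, yielding the $\Theta_\ell M^\rT JM\Re\beta_{\ell\bullet\bullet}$ term, while the real $\alpha$ paired with the $iJ$ part produces $b$. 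Verifying that these reassemble into exactly the matrix products (\ref{A}), (\ref{b}), with the sections $\Theta_{\ell\bullet\bullet}$, $\Re\beta_{\ell\bullet\bullet}$ and the column $\alpha_{\bullet\ell}$ correctly identified, is the routine but delicate remainder of the argument.
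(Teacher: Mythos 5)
Your proposal is correct and follows essentially the same route as the paper's proof: substitute the affine energy operators, compute the Hamiltonian term via the CCRs (\ref{XCCRTheta}) and the identity (\ref{cdotdiam}), reduce the quadratic monomials in the dissipator through (\ref{XXX}), and split $\Omega = I_m + iJ$ and $\beta = \Re\beta + i\Theta$ to collect real terms. Your stated pairings (the $I_m$ part of $\Omega$ with $\Theta_{\ell\bullet\bullet}$, the $J$ part with $\Re\beta_{\ell\bullet\bullet}$ and with $\alpha_{\bullet\ell}$ giving $b$, and the $N$ terms with $J$ giving the shift $M^\rT JN$) match the paper's computation exactly; the paper merely carries out the bookkeeping you defer, in vector-matrix rather than entrywise form.
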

\begin{proof}
The QSDE (\ref{dY1}) is obtained by substituting $L$ from (\ref{H_LM}) into (\ref{dY}). We will now derive (\ref{dX1}).
From the first equality in (\ref{H_LM}) and the CCRs (\ref{XCCRTheta}) in the case (\ref{Imalpha0}), it follows that
\begin{equation}
\label{iHX}
    i[H,X]
     = -i[X,H]
    = -i[X,X^\rT]E
     = 2(\Theta\cdot X)E
 =
    2\sum_{\ell=1}^n \Theta_\ell E X_\ell
     =
    2
    (\Theta \diam E)
    X
\end{equation}
in view of (\ref{diam}), (\ref{cdotdiam}).
The second equality in (\ref{H_LM}) and the same CCRs (\ref{XCCRTheta})  imply that
\begin{align}
\nonumber
    \sum_{j,k=1}^m
    \omega_{jk}
     [L_j,X]L_k
    & =
    -
    [X,L^\rT] \Omega L
    =
    -
    [X,X^\rT] M^\rT \Omega (MX+N)\\
\label{LXL}
    & =
    -2i
    (\Theta \cdot X) M^\rT \Omega (MX+N)
    =
    -2i
    \sum_{\ell = 1}^n
    \Theta_\ell M^\rT \Omega (M \Xi_{\ell \bullet}^\rT+NX_\ell),
\end{align}
where the quantum Ito matrix $\Omega$ from (\ref{dWdW_Omega_J_bJ}) is also used. Here, $\Xi_{j \bullet}$ is the $j$th row of the matrix $\Xi$ in (\ref{XXX1}), so that,
in view of (\ref{XXX}),
\begin{equation}
\label{XjX}
    X_j X
    =
    \Xi_{j \bullet}^\rT
    =
    \alpha_{\bullet j}
    +
    \beta_{j\bullet \bullet} X,
    \qquad
    j=1, \ldots, n,
\end{equation}
where $\alpha_{\bullet j} = \alpha_{j\bullet}^\rT \in \mR^n$ is the $j$th column of the real symmetric matrix $\alpha$ due to (\ref{herm}), (\ref{Imalpha0}), 
 and $\beta_{j\bullet\bullet}:= (\beta_{jk\ell})_{1\< k,\ell\< n} \in \mC^{n\x n}$ is an appropriate section of the array $\beta$.
Substitution of (\ref{XjX}) into (\ref{LXL}) leads to
\begin{equation}
\label{LXL1}
    \sum_{j,k=1}^m
    \omega_{jk}
     [L_j,X]L_k
    =
    -2i
    \sum_{\ell = 1}^n
    \Theta_\ell M^\rT \Omega (M(\alpha_{\bullet\ell }+ \beta_{\ell\bullet \bullet} X)+NX_\ell).
\end{equation}
By a similar reasoning,
\begin{align}
\nonumber
    \sum_{j,k=1}^m
    &\omega_{jk}
    L_j[X,L_k]
     =
    -(L^\rT \Omega [L,X^\rT])^\rT
    =
    -((X^\rT M^\rT+N^\rT) \Omega M[X,X^\rT])^\rT\\
\nonumber
    & =
    -2i
    ((X^\rT M^\rT+N^\rT) \Omega M (\Theta \cdot X))^\rT
    =
    -2i
    \Big(
        (X^\rT M^\rT+N^\rT) \Omega M \sum_{\ell =1}^n \Theta_\ell X_\ell
    \Big)^\rT\\
\nonumber
    & =
    2i
    \sum_{\ell = 1}^n
    \Theta_\ell M^\rT \Omega^\rT (M \Xi_{\bullet\ell}+NX_\ell)
    =
    2i
    \sum_{\ell = 1}^n
    \Theta_\ell M^\rT \overline{\Omega}
    (M (\alpha_{\bullet \ell}+\beta_{\bullet \ell \bullet}X) +NX_\ell)\\
\label{LXL2}
    & =
    2i
    \sum_{\ell = 1}^n
    \Theta_\ell M^\rT \overline{\Omega}
    (M (\alpha_{\bullet \ell}+ \overline{\beta_{\ell \bullet \bullet}}X)+ NX_\ell),
\end{align}
where $\Xi_{\bullet k}$ is the $k$th column of the matrix $\Xi$. 
Here, use is also made of the antisymmetry $\Theta_\ell^\rT = -\Theta_\ell$ together with the sections
$\beta_{\bullet k \bullet}:= (\beta_{jk\ell})_{1\< j,\ell\< n} \in \mC^{n\x n}$ of the array $\beta$, and the Hermitian property of the matrices $\Omega$ and 
 $\beta_1, \ldots, \beta_n$. It follows from (\ref{LXL1}), (\ref{LXL2}) that
\begin{align}
\nonumber
     \frac{1}{2}
    \sum_{j,k=1}^m&
    \omega_{jk}
    ( [L_j,X]L_k + L_j[X,L_k])\\
\nonumber
     & =
    -2
    \Re
    \Big(
    i
    \Big(
    \sum_{\ell = 1}^n
    \Theta_\ell M^\rT \Omega M
    \begin{bmatrix}
        \alpha_{\bullet\ell} & \beta_{\ell\bullet \bullet}
    \end{bmatrix}
    \begin{bmatrix}
      \cI_\fH\\
      X
    \end{bmatrix}
    +
    (\Theta \diam (M^\rT \Omega N)) X
    \Big)
    \Big)
    \\
\nonumber
    & =
    2
    \sum_{\ell = 1}^n
    \Theta_\ell
    M^\rT
    \Im
    (\Omega M    \begin{bmatrix}
        \alpha_{\bullet\ell} & \beta_{\ell\bullet \bullet}
    \end{bmatrix}
    )
    \begin{bmatrix}
      \cI_\fH\\
      X
    \end{bmatrix}
    +
    2
    (\Theta \diam (M^\rT J N)) X
\\
\label{dec}
    & =
    2
    \sum_{\ell = 1}^n
    \Theta_\ell
    M^\rT
    (JM\alpha_{\bullet \ell}
    +
    (
        M\Im\beta_{\ell\bullet \bullet}
        +
        J M\Re \beta_{\ell\bullet \bullet}
    )
    X)
    +
        2
    (\Theta \diam (M^\rT J N)) X,
\end{align}
where use is also made of the structure of $\Omega$ from (\ref{dWdW_Omega_J_bJ}).
In view of (\ref{cG}), a combination of (\ref{iHX}) with (\ref{dec}) allows the drift of the QSDE (\ref{dX}) to be computed as
\begin{align}
\nonumber
    \cG(X)
    = &
        2
        (\Theta \diam E) X
        +
    2
    \sum_{\ell = 1}^n
    \Theta_\ell
    M^\rT
    (JM\alpha_{\bullet \ell}
    +
    (
        M\Im\beta_{\ell\bullet \bullet}
        +
        J M\Re \beta_{\ell\bullet \bullet}
    )
    X)    \\
\label{cG1}
    & +
        2
    (\Theta \diam (M^\rT J N)) X
       =  AX + b,
\end{align}
where the matrix $A$ and the vector $b$ are given by (\ref{A}), (\ref{b}). In accordance with (\ref{BX}) and an intermediate step in (\ref{LXL}),
\begin{equation}
\label{iXL}
  -i[X,L^\rT]
  =
  -i[X,X^\rT]M^\rT = 2(\Theta\cdot X) M^\rT = B(X).
\end{equation}
Substitution of (\ref{cG1}), (\ref{iXL}) into (\ref{dX}) establishes  (\ref{dX1}).
\end{proof}

Similarly to the case of  linear QSDEs for open quantum harmonic oscillators, the quasilinearity of the QSDE (\ref{dX1}) gives rise to a specific structure of its solutions discussed below.

\begin{theorem}
\label{th:var}
Under the conditions of Theorem~\ref{th:QSDE}, the system variables, governed by (\ref{dX1}),  satisfy
\begin{equation}
\label{Xts}
    X(t) = E_{t,s}X(s) + \int_s^t E_{t,\tau}\rd \tau b,
    \qquad
    t \> s \> 0,
\end{equation}
where
\begin{equation}
\label{Ets}
  E_{t,s}
  :=
    \lexp
  \int_s^t
  (A\rd \tau + 2\Theta \diam (M^\rT \rd W(\tau)))
\end{equation}
is a leftwards time-ordered exponential, and use is made of (\ref{diam}), (\ref{Theta}),  (\ref{A}), (\ref{b}). \hfill$\square$
\end{theorem}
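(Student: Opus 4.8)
The plan is to recognise the quasilinear QSDE (\ref{dX1}) as a \emph{linear} QSDE in $X$ driven on the left by an operator-valued coefficient, whose fundamental solution is exactly the time-ordered exponential (\ref{Ets}). First I would rewrite the diffusion term. By (\ref{BX}), the dispersion is $B(X) = 2(\Theta\cdot X)M^\rT$, so $B(X)\rd W = 2(\Theta\cdot X)(M^\rT\rd W)$. The forward increment $M^\rT\rd W(t)$ commutes with the adapted vector $X(t)$ (acting on $\fH_t$), i.e. $[X,(M^\rT\rd W)^\rT]=0$, so the identity (\ref{cdotdiam}) applies with $u=X$ and $v=M^\rT\rd W$ and yields
\[
    B(X)\rd W = 2(\Theta\diam(M^\rT\rd W))X .
\]
Substituting this into (\ref{dX1}) recasts the QSDE in the genuinely linear form $\rd X = (\rd K)X + b\,\rd t$, where $\rd K(t):= A\,\rd t + 2\Theta\diam(M^\rT\rd W(t))$ is precisely the increment under the integral sign in (\ref{Ets}); note that the drift matrix $A$ from (\ref{A}) already absorbs the quantum Ito correction produced by the GKSL term in Theorem~\ref{th:QSDE}.

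I would then record the defining property of the leftwards time-ordered exponential (\ref{Ets}): read as the product limit $\prod(I_n + \rd K)$, it is the unique solution of the homogeneous linear QSDE $\rd_t E_{t,s} = (\rd K(t))\, E_{t,s}$ with $E_{s,s}=I_n$ (and $E_{t,t}=I_n$), the internal Ito corrections coming from $\rd W\rd W^\rT = \Omega\,\rd t$ being built into its value.

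Finally I would verify (\ref{Xts}) by quantum stochastic variation of constants. Put $Z(t):= E_{t,s}X(s) + \int_s^t E_{t,\tau}b\,\rd\tau$. Since $X(s)$ and $b$ carry no increment in $t$, differentiating $E_{t,s}X(s)$ produces no Ito cross term and gives $(\rd K)E_{t,s}X(s)$, while a Leibniz computation for the parameter-dependent integral gives
\[
    \rd_t\int_s^t E_{t,\tau}b\,\rd\tau
    =
    b\,\rd t + (\rd K)\int_s^t E_{t,\tau}b\,\rd\tau ,
\]
the boundary term being $E_{t,t}b\,\rd t = b\,\rd t$ and the $\tau$-independent factor $\rd K(t)$ coming out of the integral. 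Hence $\rd_t Z = (\rd K)Z + b\,\rd t$, and reversing the commutation step via (\ref{cdotdiam}) turns $(\rd K)Z$ back into $AZ\,\rd t + B(Z)\rd W$, so that $Z$ satisfies (\ref{dX1}) with $Z(s)=X(s)$. By uniqueness of the solution of the QSDE this forces $Z=X$, which is (\ref{Xts}).

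The main obstacle is the careful bookkeeping of the quantum Ito calculus rather than any single hard estimate: one must justify $[X,(M^\rT\rd W)^\rT]=0$ from adaptedness so that (\ref{cdotdiam}) may be used with the noise increment, and confirm that no spurious Ito correction appears in the Leibniz differentiation of the time-ordered exponential or in the reordering of $(\rd K)Z$. Both hinge on the fact that the fixed-time factors $X(s)$ and $b$ do not interact with the forward field increments; once this is in place, the remaining manipulations are routine.
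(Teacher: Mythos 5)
Your proposal is correct and follows essentially the same route as the paper: rewrite the dispersion term as $2(\Theta\diam(M^\rT\rd W))X$ via (\ref{cdotdiam}) and the commutativity of forward It\^{o} increments with adapted variables, recast (\ref{dX1}) as a nonhomogeneous linear QSDE, identify (\ref{Ets}) as the fundamental solution of its homogeneous part, and conclude by variation of constants. Your explicit verification of the candidate solution plus uniqueness merely spells out the variation-of-constants step that the paper leaves implicit.
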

\begin{proof}
By combining (\ref{cdotdiam}) with (\ref{BX}) and the commutativity between the forward Ito increments of $W$ and adapted quantum processes (whereby $[\rd W(t), X(s)^\rT] = 0$ for any $t\> s\> 0$), it follows that
\begin{equation}
\label{BXW}
    B(X)\rd W
    =
    2(\Theta \cdot X)M^\rT \rd W
    =
    2(\Theta \diam (M^\rT \rd W))X.
\end{equation}
This allows (\ref{dX1}) to be represented as a nonhomogeneous linear QSDE with the quantum Wiener process $W$ in the coefficients:
\begin{equation}
\label{dX2}
    \rd X
    =
    (A\rd t + 2\Theta \diam (M^\rT \rd W))X
    +
    b\rd t.
\end{equation}
The exponential (\ref{Ets}) provides the  fundamental solution for the  homogeneous part of the QSDE (\ref{dX2}):
\begin{align}
\nonumber
    \rd_t E_{t,s}
    & =
    (A\rd t + 2\Theta \diam (M^\rT \rd W(t)))
    E_{t,s}
    =
    A E_{t,s}\rd t + 2(\Theta \cdot E_{t,s})M^\rT \rd W(t)\\
\label{dEts}
    & =
    AE_{t,s}\rd t + B(E_{t,s})\rd W(t),
    \qquad
    t \> s \> 0,
    \qquad
    E_{s,s} := I_n,
\end{align}
in view of (\ref{BXW}).
The relation (\ref{Xts}) can now be obtained from (\ref{dEts}) by using the variation of constants.
\end{proof}

In contrast to the usual matrix exponentials as fundamental solutions of linear ODEs with constant coefficients, the time-ordered exponential $E_{t,s}$ in (\ref{Ets})  is an $(n\x n)$-matrix of self-adjoint quantum variables which commute with operators on the system-field subspace $\fH_s$ as in (\ref{fHt}). This follows from the continuous tensor-product structure of the Fock space \cite{PS_1972} and leads to a two-point extension of the one-point CCRs (\ref{XCCRTheta}) for the system variables:
\begin{align*}
\nonumber
    [X(t), X(s)^\rT]
    & =
    [E_{t,s} X(s), X(s)^\rT] + \int_s^t [E_{t,\tau}b, X(s)^\rT]\rd \tau \\
\label{XXcommts}
    & =
    E_{t,s} [X(s), X(s)^\rT]
    =
    2i E_{t,s} (\Theta \cdot X(s)),
    \qquad
    t \> s\> 0.
\end{align*}
Here, use is made of (\ref{Xts}),  and the commutativity between the entries of $E_{t,\tau}$ and $X(s)$ for all $t\> \tau \> s\> 0$ is combined with the identities $[\xi \eta, \zeta^\rT] = \xi\eta \zeta^\rT - (\zeta (\xi \eta)^\rT)^\rT = \xi\eta \zeta^\rT - (\zeta \eta^\rT \xi^\rT)^\rT = \xi\eta \zeta^\rT - \xi (\zeta \eta^\rT)^\rT = \xi [\eta, \zeta^\rT]$ which hold for appropriately dimensioned matrix $\xi$ and vectors $\eta$, $\zeta$ of quantum variables such that the entries of $\xi$ commute with those of $\eta$, $\zeta$.

The quasilinearity of the QSDE (\ref{dX1}) and linearity of (\ref{dY1}) will also be used in the subsequent sections in order to study the evolution of moments of the system variables and apply it to a quantum filtering problem.

\section{Moment dynamics of the system variables}
\label{sec:mom}

The proof of Theorem~\ref{th:bound} employed the representation of the matrix $\bE \Xi$ of the second-order moments and the covariance matrix $\cov(X)$ for the system variables in (\ref{covX}) in terms of their mean values in (\ref{mu}) as a corollary of the algebraic property (\ref{XXX1}). This is closely related to the reduction of quadratic functions of the system variables to affine functions  in (\ref{XRX}).  The following theorem provides a similar closed-form reduction for a wider class of nonlinear functions of the system variables.

\begin{theorem}
\label{th:red}
Under the conditions (\ref{XXX}), (\ref{herm}), for any entire function $f$ and a vector $u\in \mC^n$,
\begin{equation}
\label{fred}
  f(u^\rT X)
  =
        \begin{bmatrix}
        1 & \bzero_n^\rT
    \end{bmatrix}
    f\left(
    \begin{bmatrix}
      0 & u^\rT \\
      \alpha u &  \beta \diam u
    \end{bmatrix}
    \right)
    \begin{bmatrix}
      \cI_\fH\\
      X
    \end{bmatrix},
\end{equation}
where $\bzero_n$ is the column-vector of $n$ zeros. \hfill$\square$
\end{theorem}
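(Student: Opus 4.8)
The plan is to reduce every power $(u^\rT X)^p$ to an affine function of the system variables by means of a single $(n+1)\times(n+1)$ matrix, and then to sum the power series of $f$. Throughout, I would write $Y := u^\rT X$ for the operator whose function is sought, abbreviate the augmented vector $\wh{X} := \begin{bmatrix} \cI_\fH \\ X\end{bmatrix}$, and denote by $\Lambda := \begin{bmatrix} 0 & u^\rT \\ \alpha u & \beta \diam u\end{bmatrix}$ the matrix appearing on the right-hand side of (\ref{fred}), with first-row selector $\begin{bmatrix} 1 & \bzero_n^\rT\end{bmatrix}$.

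The key step is the operator identity
\[
    \wh{X}\, Y = \Lambda\, \wh{X},
\]
to be read componentwise with $Y$ acting by \emph{right} multiplication. Its zeroth component is $\cI_\fH Y = Y = u^\rT X$, which matches the first row $\begin{bmatrix} 0 & u^\rT\end{bmatrix}$ of $\Lambda$ applied to $\wh{X}$. For $j = 1,\ldots,n$, I would expand $X_j Y = \sum_{k=1}^n u_k X_j X_k = \sum_{k=1}^n u_k \Xi_{jk}$ and substitute the algebraic relation (\ref{XXX}): collecting the $\cI_\fH$-term gives $\sum_k \alpha_{jk} u_k = (\alpha u)_j$, while the coefficient of $X_\ell$ gives $\sum_k \beta_{jk\ell} u_k = (\beta_\ell u)_j = (\beta\diam u)_{j\ell}$ in view of (\ref{diam}). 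This is precisely the $j$th component of $\Lambda \wh{X}$, so the identity holds. The one point requiring care is the ordering: $Y$ must multiply $X_j$ on the right, so that the product $X_jX_k = \Xi_{jk}$ appears with indices in the order that matches $\alpha u$ and $\beta\diam u$; this is where I expect the main (and essentially the only) subtlety to lie.

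Iterating the key identity by induction on $p$ then yields $\wh{X}\, Y^p = \Lambda^p\, \wh{X}$. The base case $p=0$ is $\wh{X}\cI_\fH = \wh{X}$, and the inductive step uses $\wh{X} Y^{p+1} = (\wh{X} Y^p)Y = \Lambda^p(\wh{X} Y) = \Lambda^{p+1}\wh{X}$, where the scalar entries of $\Lambda^p$ commute past the right multiplication by $Y$ and associativity of the operator product is invoked. Reading off the zeroth component gives $Y^p = \begin{bmatrix} 1 & \bzero_n^\rT\end{bmatrix}\Lambda^p\,\wh{X}$ for every $p\ge 0$.

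Finally, writing the entire function as $f(z) = \sum_{p\ge 0} c_p z^p$, I would sum the identities $c_p Y^p = c_p \begin{bmatrix} 1 & \bzero_n^\rT\end{bmatrix}\Lambda^p\,\wh{X}$ over $p$ and interchange the summation with the fixed linear action $\begin{bmatrix} 1 & \bzero_n^\rT\end{bmatrix}(\cdot)\,\wh{X}$, obtaining $f(Y) = \begin{bmatrix} 1 & \bzero_n^\rT\end{bmatrix} f(\Lambda)\,\wh{X}$, which is (\ref{fred}). This interchange is legitimate because $Y = u^\rT X$ is a bounded operator by Theorem~\ref{th:bound} and $\Lambda$ is a fixed finite matrix, so both series $\sum_p c_p Y^p$ and $\sum_p c_p \Lambda^p$ converge absolutely in the operator and matrix norms, respectively. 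With the key identity established, the induction and the series manipulation are routine.
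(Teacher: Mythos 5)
Your proposal is correct and is essentially the paper's own argument: the same key identity $\begin{bmatrix} \cI_\fH \\ X\end{bmatrix} u^\rT X = \begin{bmatrix} 0 & u^\rT \\ \alpha u & \beta\diam u\end{bmatrix}\begin{bmatrix} \cI_\fH \\ X\end{bmatrix}$ (which the paper verifies in vector-matrix form via (\ref{XXX1}) and (\ref{cdotdiam}) rather than entrywise, as you do), the same induction on powers $(u^\rT X)^r$, and the same substitution into the Maclaurin series of $f$. Your closing remark justifying the series interchange via the boundedness of $u^\rT X$ from Theorem~\ref{th:bound} is a detail the paper leaves implicit.
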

\begin{proof}
It follows from (\ref{XXX1}), (\ref{cdotdiam}) that
\begin{align}
\nonumber
    \begin{bmatrix}
      \cI_\fH\\
      X
    \end{bmatrix}
    u^\rT X
    & =
    \begin{bmatrix}
      u^\rT X\\
      XX^\rT u
    \end{bmatrix}
    =
        \begin{bmatrix}
      u^\rT X\\
      (\alpha + \beta \cdot X) u
    \end{bmatrix}    \\
\nonumber
    & =
    \begin{bmatrix}
      u^\rT X\\
      \alpha u + (\beta \diam u) X
    \end{bmatrix}
    =
    \begin{bmatrix}
      0 & u^\rT \\
      \alpha u &  \beta \diam u
    \end{bmatrix}
    \begin{bmatrix}
      \cI_\fH\\
      X
    \end{bmatrix},
    \qquad
    u \in \mC^n.
\end{align}
Hence, by induction,
\begin{equation}
\label{uXr}
    (u^\rT X)^r
     =
    \begin{bmatrix}
        1 & \bzero_n^\rT
    \end{bmatrix}
    \begin{bmatrix}
      \cI_\fH\\
      X
    \end{bmatrix}
    (u^\rT X)^r
    =
        \begin{bmatrix}
        1 & \bzero_n^\rT
    \end{bmatrix}
    \begin{bmatrix}
      0 & u^\rT \\
      \alpha u &  \beta \diam u
    \end{bmatrix}^r
    \begin{bmatrix}
      \cI_\fH\\
      X
    \end{bmatrix}
\end{equation}
for all $r=0,1,2,\ldots$.
Substitution of (\ref{uXr}) into the Maclaurin series  of an arbitrary entire function $f$ leads to
\begin{align*}
    f(u^\rT X)
    & =
    \sum_{r=0}^{+\infty}\frac{1}{r!} f^{(r)}(0) (u^\rT X)^r
    =
        \begin{bmatrix}
        1 & \bzero_n^\rT
    \end{bmatrix}
    \sum_{r=0}^{+\infty}\frac{1}{r!} f^{(r)}(0)
    \begin{bmatrix}
      0 & u^\rT \\
      \alpha u &  \beta \diam u
    \end{bmatrix}^r
    \begin{bmatrix}
      \cI_\fH\\
      X
    \end{bmatrix}    \\
    & =
        \begin{bmatrix}
        1 & \bzero_n^\rT
    \end{bmatrix}
    f\left(
    \begin{bmatrix}
      0 & u^\rT \\
      \alpha u &  \beta \diam u
    \end{bmatrix}
    \right)
    \begin{bmatrix}
      \cI_\fH\\
      X
    \end{bmatrix},
\end{align*}
which establishes (\ref{fred}).
\end{proof}

The nonlinearity on the left-hand side of (\ref{fred}) is reminiscent of those in the classical
Lur'e systems \cite{L_1951}, which were used for perturbation modelling in a quantum control context, for example,  in \cite{VP_2012b}. 
Due to the affine dependence of the right-hand side of (\ref{fred}) on $X$, its application to an exponential function $f(z):= \re^{iz}$ allows the quasi-characteristic function (QCF) 
of the system variables  to be expressed in terms of their mean values (\ref{mu}):
\begin{equation}
\label{QCF}
    \Phi(t,u)
    :=
    \bE \re^{iu^\rT X(t)}
    =
        \begin{bmatrix}
        1 & \bzero_n^\rT
    \end{bmatrix}
    \exp\left(
    i
    \begin{bmatrix}
      0 & u^\rT \\
      \alpha u &  \beta \diam u
    \end{bmatrix}
    \right)
    \begin{bmatrix}
      1\\
      \mu(t)
    \end{bmatrix},
    \quad
    t \> 0,\
    u \in \mR^n.
\end{equation}
The QCF $\Phi$ depends on time only through the mean vector $\mu$ whose evolution  is amenable to closed-form computation due to the affine dependence of the drift vector of the QSDE (\ref{dX1}) on the system variables. More precisely, it is assumed in what follows that the input fields are in the vacuum quantum state $\ups$ on the Fock space $\fF$, which can be  described in terms of their  quasi-characteristic functionals \cite{CH_1971,HP_1984,P_1992} as
\begin{equation}
\label{vac}
    \bE \re^{i\int_0^t u(s)^\rT\rd W(s) } = \re^{-\frac{1}{2}\int_0^t|u(s)|^2\rd s},
    \qquad
    t\> 0,
\end{equation}
for any locally square integrable function $u: \mR_+\to \mR^m$. Here,
the quantum expectation (\ref{bE}) is  over the system-field density operator
\begin{equation}
\label{rho}
    \rho := \varpi \ox \ups,
\end{equation}
which is the tensor product of the initial system state $\varpi$ and the vacuum field state $\ups$. Since (\ref{vac}) involves only the input fields, the averaging  there reduces to that over the vacuum state $\ups$ (indeed, $\bE \zeta = \Tr(\ups\zeta)$ for any operator $\zeta$ on the Fock space $\fF$).  In the case of vacuum input fields, the martingale part $B(X)\rd W$ of the QSDE (\ref{dX1}) does not contribute to the quantum average of its right-hand side, and the mean vector $\mu$ in (\ref{mu}) evolves as
\begin{equation}
\label{EXdot}
    \dot{\mu }= A\mu + b,
\end{equation}
where $\dot{(\ )}$ is the time derivative.  The solutions of the ODE (\ref{EXdot}) satisfy
\begin{equation}
\label{musol}
  \mu(t) = \re^{(t-s)A}\mu(s) + \Psi(t-s)b,
  \qquad
  t \> s\> 0,
\end{equation}
where the $\mR^{n\x n}$-valued function
\begin{equation}
\label{psi}
    \Psi(t):= \int_0^t \re^{sA}\rd s
    =
    A^{-1}(\re^{tA}-I_n),
    \qquad
    t \in \mR,
\end{equation}
is obtained by evaluating  \cite{H_2008} the entire function $\frac{1}{z}(\re^{tz}-1)$ of $z \in \mC$ (extended by continuity to $t$ at $z=0$) at the matrix $A$ from (\ref{A}). 
Therefore, the mean vector has the limit
\begin{equation}
\label{mu*}
  \mu_\infty := \lim_{t\to +\infty}\mu(t) = -A^{-1} b,
\end{equation}
provided the matrix $A$ in (\ref{A}) is Hurwitz. In special cases, one of which is discussed in Section~\ref{sec:Pauli}, there exist easily verifiable sufficient conditions for the Hurwitz property of $A$. Also, (\ref{mu*}) implies that the mean Hamiltonian of the system in (\ref{H_LM}) satisfies
$
    \lim_{t\to +\infty}\bE H(t) = E^\rT \mu_\infty
$. 
The invariant quantum state of the system can be represented in terms of the corresponding QCF (\ref{QCF}):
\begin{equation}
\label{QCF*}
    \Phi_\infty(u)
     :=
    \lim_{t\to +\infty}
    \Phi(t,u)
    =
        \begin{bmatrix}
        1 & \bzero_n^\rT
    \end{bmatrix}
    \exp\left(
    i
    \begin{bmatrix}
      0 & u^\rT \\
      \alpha u &  \beta \diam u
    \end{bmatrix}
    \right)
    \begin{bmatrix}
      1\\
      \mu_\infty
    \end{bmatrix},
    \qquad
    u \in \mR^n.
\end{equation}
A similar combination of Theorem~\ref{th:red} with the limit relation (\ref{mu*}) allows for computation of infinite-horizon asymptotic growth rates for a class of integral cost functionals:
\begin{equation}
\label{rate}
    \lim_{T \to +\infty}
    \Big(
    \frac{1}{T}
    \bE
    \int_0^T
    \sum_{k=1}^s
    f_k(u_k^\rT X(t))
    \rd t
    \Big)
    =
        \begin{bmatrix}
        1 & \bzero_n^\rT
    \end{bmatrix}
    \sum_{k=1}^s
    f_k\left(
    \begin{bmatrix}
      0 & u_k^\rT \\
      \alpha u_k &  \beta \diam u_k
    \end{bmatrix}
    \right)
    \begin{bmatrix}
      1\\
      \mu_\infty
    \end{bmatrix},
\end{equation}
where $f_1, \ldots, f_s$ are entire functions with real coefficients, and $u_1, \ldots, u_s \in \mR^n$.  Here, use is made of the convergence of the Cesaro mean to the same limit.   In the case of quadratic cost functionals, the asymptotic growth rate can be  obtained directly by averaging (\ref{XRX}):
\begin{equation}
\label{quadrate}
    \lim_{T \to +\infty}
    \Big(
    \frac{1}{T}
    \bE
    \int_0^T
    X(t)^\rT R X(t)
    \rd t
    \Big)
    =
    \bra R, \alpha\ket_\rF
    +
    \begin{bmatrix}
      \bra R, \beta_1\ket_\rF &
      \ldots &
      \bra R, \beta_n\ket_\rF
    \end{bmatrix}
      \mu_\infty.
\end{equation}
The QCF $\Phi$ in  (\ref{QCF}) and its steady-state version $\Phi_\infty$  in (\ref{QCF*}) pertain to quantum statistical properties of the system variables at the same point in time, and so also do the cost functionals in (\ref{rate}), (\ref{quadrate}).  Multi-point statistical properties of the system variables at (in general, different) moments of time $t_1, \ldots, t_q \> 0$ can be described in terms of the multilinear function
\begin{equation}
\label{muq}
    M_q(t_1, \ldots, t_q; u_1, \ldots, u_q)
    :=
    \bE \lprod_{k=1}^q u_k^\rT X(t_k),
    \qquad
    u_1, \ldots, u_q \in \mR^n,
\end{equation}
where $\lprod_{k=1}^q\zeta_k := \zeta_q \x \ldots \x \zeta_1$ is the leftward-ordered product of linear operators (the order of multiplication is essential in the noncommutative case), and $q = 1,2,3,\ldots$. The corresponding mixed moments of the system variables are  recovered from (\ref{muq}) as
\begin{equation}
\label{moms}
    \bE \lprod_{k=1}^q
    X_{j_k}(t_k)
    =
    \d_{u_{j_1 1}}\ldots \d_{u_{j_q q}} M_q(t_1, \ldots, t_q; u_1, \ldots, u_q),
    \quad
    1\< j_1, \ldots, j_q \< n,
\end{equation}
where the partial derivatives are over the entries of the vectors $u_k:= (u_{jk})_{1\< j\< n}  \in \mR^n$. In particular,
\begin{equation}
\label{M1}
    M_1(t_1;u_1) = u_1^\rT \mu(t_1).
\end{equation}
The following theorem is concerned with the moments (\ref{muq}) in the case when the instants $t_1, \ldots, t_q$ form a nondecreasing  sequence.

\begin{theorem}
\label{th:mom}
Suppose the conditions of Theorem~\ref{th:QSDE} are satisfied, so that the system variables are governed by the QSDE (\ref{dX1}). Also, suppose the system-field state is given by (\ref{rho}), with the input fields being in the vacuum state $\ups$. 
Then, for any $q=2,3,\ldots$ and any  moments of time $0\< t_1\<  \ldots \<  t_q$, the functions $M_q$ in (\ref{muq}) satisfy the second-order  recurrence equation
\begin{align}
\nonumber
    M_q(t_1, \ldots, t_q; u_1, \ldots, u_q)
     =&
    M_{q-1}(t_1, \ldots, t_{q-1}; u_1, \ldots,  u_{q-2},   u
    )\\
\label{munext}
    & +
    w
    M_{q-2}(t_1, \ldots, t_{q-2}; u_1, \ldots, u_{q-2}),\\
\label{u}
    u :=&
    ((\beta\diam u_{q-1})^\rT \re^{(t_q-t_{q-1})A^\rT}
    +u_{q-1}b ^\rT\Psi(t_q-t_{q-1})^\rT)u_q    ,\\
\label{w}
    w:= &
    u_q^\rT \re^{(t_q-t_{q-1})A} \alpha u_{q-1}
\end{align}
for all $    u_1, \ldots, u_q \in \mR^n$, where the initial conditions $M_0:= 1$ and (\ref{M1})
are used along with (\ref{psi}).
\hfill$\square$
\end{theorem}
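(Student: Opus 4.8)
The plan is to exploit the fundamental-solution representation (\ref{Xts}) of Theorem~\ref{th:var} to peel off the latest-time factor $u_q^\rT X(t_q)$ from the leftward-ordered product in (\ref{muq}), and then to reduce the resulting expressions by a combination of vacuum averaging and the algebraic property (\ref{XXX1}). Since $t_q$ is the largest of the instants, I would substitute $X(t_q) = E_{t_q,t_{q-1}}X(t_{q-1}) + \int_{t_{q-1}}^{t_q}E_{t_q,\tau}\rd\tau\, b$ into $M_q = \bE\big[u_q^\rT X(t_q)\, u_{q-1}^\rT X(t_{q-1})\cdots u_1^\rT X(t_1)\big]$, splitting it into a homogeneous term carrying $E_{t_q,t_{q-1}}X(t_{q-1})$ and an inhomogeneous term carrying the drift integral.

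The central structural observation is that the entries of $E_{t_q,t_{q-1}}$ and of $\int_{t_{q-1}}^{t_q}E_{t_q,\tau}\rd\tau$ are operators built from the field increments over $[t_{q-1},t_q]$, and hence (as noted after Theorem~\ref{th:var}, via the continuous tensor-product structure of $\fF$) they commute with the operators $X(t_1),\ldots,X(t_{q-1})$, which act on the subspace $\fH_{t_{q-1}}$. Because the system--field state (\ref{rho}) has the vacuum field factor, the quantum expectation factorises across $t_{q-1}$, so the field part built from $E_{t_q,t_{q-1}}$ separates from the past system operators. Averaging the QSDE (\ref{dEts}) in the vacuum state (where the martingale term drops out) gives $\bE E_{t_q,t_{q-1}} = \re^{(t_q-t_{q-1})A}$ and, after integration, $\bE\int_{t_{q-1}}^{t_q}E_{t_q,\tau}\rd\tau = \Psi(t_q-t_{q-1})$ from (\ref{psi}). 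This is the step I expect to require the most care, since it rests on the adaptedness/commutativity and on the tensor factorisation of the vacuum state rather than on a purely algebraic manipulation.

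With the averaging in place, the homogeneous term produces $u_q^\rT \re^{(t_q-t_{q-1})A}\,\bE\big[X(t_{q-1})\,u_{q-1}^\rT X(t_{q-1})\,\lprod_{k=1}^{q-2}u_k^\rT X(t_k)\big]$, in which the two factors at the coincident time $t_{q-1}$ form $X(t_{q-1})X(t_{q-1})^\rT u_{q-1}$. Here I would invoke the time-invariant algebraic relation (\ref{XXX1}) (preserved by the flow, see (\ref{XXuni})) in the form $XX^\rT u_{q-1} = \alpha u_{q-1} + (\beta\diam u_{q-1})X$ used in the proof of Theorem~\ref{th:red}, via (\ref{cdotdiam}). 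This collapses the coincident factor into an affine function of $X(t_{q-1})$: the constant part $\alpha u_{q-1}$ factors out and leaves $u_q^\rT\re^{(t_q-t_{q-1})A}\alpha u_{q-1} = w$ times $M_{q-2}$, while the linear part $(\beta\diam u_{q-1})X(t_{q-1})$ yields an $M_{q-1}$ with its last slot replaced by the vector $(\beta\diam u_{q-1})^\rT\re^{(t_q-t_{q-1})A^\rT}u_q$.

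Finally, the inhomogeneous term factorises into the scalar $u_q^\rT\Psi(t_q-t_{q-1})b$ times $M_{q-1}(t_1,\ldots,t_{q-1};u_1,\ldots,u_{q-1})$. Using that $M_{q-1}$ is linear in its last argument (a direct consequence of definition (\ref{muq})), I would absorb this scalar into the last slot as $u_{q-1}\,b^\rT\Psi(t_q-t_{q-1})^\rT u_q$ and merge it with the contribution from the previous paragraph; their sum is precisely $M_{q-1}$ evaluated at the vector $u$ in (\ref{u}). Collecting the two surviving terms gives the recurrence (\ref{munext})--(\ref{w}); the base cases $M_0 = 1$ and (\ref{M1}) supply the initialisation.
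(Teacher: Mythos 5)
Your proposal is correct and follows essentially the same route as the paper's proof: the paper merely packages your vacuum-factorisation and averaging of the time-ordered exponential into an intermediate identity $\bE(X(t)\eta) = \re^{(t-s)A}\bE(X(s)\eta) + \Psi(t-s)b\,\bE\eta$ for $\eta$ acting on $\fH_s$, and then applies it with $\eta$ equal to the product of the earlier factors. The subsequent steps — reducing the coincident-time product $X(t_{q-1})X(t_{q-1})^\rT u_{q-1}$ via (\ref{XXX1}) and (\ref{cdotdiam}), and absorbing the drift contribution into the last slot of $M_{q-1}$ by linearity — are exactly those in the paper.
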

\begin{proof}
From the structure of the time-ordered exponential $E_{t,s}$ in (\ref{Ets}) and the continuous tensor-product structure of the Fock space $\fF$ and the vacuum state $\ups$, it follows that
\begin{equation}
\label{EEE1}
    \bE (E_{t,\tau}\eta) = \bE E_{t,\tau} \bE \eta,
    \qquad
    t\> \tau \> s\> 0,
\end{equation}
for any quantum variable $\eta$ on the system-field subspace $\fH_s$ in (\ref{fHt}). The averaging,  applied to the QSDE in the initial value problem (\ref{dEts}),  and the fact that the martingale part $B(E_{t,s})\rd W(t)$ does not contribute to the average, lead to  the ODE $\d_t \bE E_{t,s} = A \bE E_{t,s}$, with $\bE E_{s,s} = I_n$, and hence,
\begin{equation}
\label{EEE2}
    \bE E_{t,s} = \re^{(t-s)A},
    \qquad
    t \> s\> 0.
\end{equation}
A combination of (\ref{EEE1}), (\ref{EEE2}) with (\ref{Xts}), (\ref{psi}) yields
\begin{align}
\nonumber
  \bE (X(t) \eta)
  & =
  \bE \Big(\Big(E_{t,s}X(s) + \int_s^t E_{t,\tau}\rd \tau b\Big)\eta\Big) \\
\label{EEE3}
    & =
    \re^{(t-s)A}\bE(X(s)\eta) + \Psi(t-s) b \bE \eta,
    \qquad
    t\> s\> 0,
\end{align}
for any quantum variable $\eta$ on the system-field subspace $\fH_s$. In particular, by applying (\ref{EEE3}), with
\begin{align}
\label{ttt}
    t
    & := t_q\> s:= t_{q-1}\> \ldots \> t_1\> 0,\\
\label{etazeta}
    \eta
    & := \lprod_{k=1}^{q-1} u_k^\rT X(t_k)
    =
    u_{q-1}^\rT  X(s) \zeta,
    \qquad
    \zeta
     := \lprod_{k=1}^{q-2} u_k^\rT X(t_k),
\end{align}
to (\ref{muq}), it follows that
\begin{align}
\nonumber
    M_q(t_1, \ldots, &t_q; u_1, \ldots, u_q)
    =
    u_q^\rT
    \bE
    (
    X(t)
    \eta
    )
    \\
\nonumber
    &=
    u_q^\rT
    \big(
    \re^{(t-s)A}
    \bE
    (
    X(s)
    \eta)
    +
    \Psi(t-s)
    b
    \bE \eta
    \big)\\
\nonumber
    &=
    u_q^\rT \re^{(t-s)A}
    \bE
    (
    X(s)X(s)^\rT
    u_{q-1}
    \zeta
    )
    +
    u_q^\rT
    \Psi(t-s)
    b
    \bE \eta
    \\
\nonumber
    &=
    u_q^\rT \re^{(t-s)A}
    \bE
    (
    (\alpha + \beta \cdot X(s))
    u_{q-1}
    \zeta
    )
    +
    u_q^\rT
    \Psi(t-s)
    b
    \bE \eta
    \\
\nonumber
    &=
    u_q^\rT \re^{(t-s)A}
    \bE
    (
    (\alpha u_{q-1} + (\beta\diam u_{q-1}) X(s))
    \zeta
    )
    +
    u_q^\rT
    \Psi(t-s)
    b
    \bE \eta
    \\
\label{munext1}
    &=
    u_q^\rT \re^{(t-s)A}\alpha u_{q-1} \bE \zeta
    +
    u_q^\rT
    (\re^{(t-s)A}(\beta\diam u_{q-1})
    +\Psi(t-s)b u_{q-1}^\rT)
    \bE
     (X(s)
    \zeta
    ),
\end{align}
where use is also made of (\ref{XXX1}). Now, in view of (\ref{ttt}), (\ref{etazeta}) and (\ref{muq}),
\begin{align}
\label{Ezeta}
    \bE \zeta
    & = M_{q-2}(t_1, \ldots, t_{q-2}; u_1, \ldots, u_{q-2}),\\
\label{EXzeta}
    u^\rT\bE (X(s)\zeta)
    & = M_{q-1}(t_1, \ldots, t_{q-1}; u_1, \ldots, u_{q-2},u),
    \qquad
    u \in \mR^n.
\end{align}
Substitution of (\ref{Ezeta}), (\ref{EXzeta}), with $u:=
    ((\beta\diam u_{q-1})^\rT \re^{(t-s)A^\rT}
    +u_{q-1}b ^\rT\Psi(t-s)^\rT)u_q$,   into (\ref{munext1}) leads to (\ref{munext})--(\ref{w}). 
\end{proof}

Note that (\ref{u}), (\ref{w}) specify bilinear functions of $u_{q-1}, u_q \in \mR^n$. In particular,  in the case of $q=2$, application of (\ref{M1})--(\ref{w}) yields
\begin{align}
\nonumber
    M_2(s, t; u_1, u_2)
     &=
    M_1(s; u) +
    w\\
\nonumber
     & = u^\rT \mu(s)     + w\\
\nonumber
    & =
    u_2^\rT
    (\re^{(t-s)A}(\beta\diam u_1)
    +
    \Psi(t-s)b u_1^\rT) \mu(s) + w,\\
\label{M2}
    & =
    u_2^\rT
    (\re^{(t-s)A}(\alpha + \beta\cdot \mu(s))
    +
    \Psi(t-s)b \mu(s)^\rT)
    u_1,
    \qquad
    t\> s\> 0,
\end{align}
for all $u_1, u_2 \in \mR^n$,
where use is also made of (\ref{cdotdiam}). In accordance with (\ref{moms}), the two-point second-order  moments of the system variables can be  recovered from (\ref{M2}) as
\begin{align}
\nonumber
    \bE(X(t)X(s)^\rT)
    & = \d_{u_1}\d_{u_2} M_2(s, t; u_1, u_2)\\
\label{EXX}
    &  =
    \re^{(t-s)A}(\alpha + \beta\cdot \mu(s))
    +
    \Psi(t-s)b \mu(s)^\rT.
\end{align}
A combination of (\ref{EXX}) with (\ref{musol}) leads to the quantum covariance function of the system variables:
\begin{align}
\nonumber
  \cov(X(t),X(s))
   := &
  \bE(X(t)X(s)^\rT) - \mu(t)\mu(s)^\rT\\
\nonumber
   = &
    \re^{(t-s)A}(\alpha + \beta\cdot \mu(s))
    +
    \Psi(t-s)b \mu(s)^\rT  \\
\nonumber
    & - (\re^{(t-s)A}\mu(s) + \Psi(t-s)b)\mu(s)^\rT\\
\label{covXX}
    = &
    \re^{(t-s)A}(\alpha + \beta\cdot \mu(s) - \mu(s)\mu(s)^\rT),
    \qquad
    t\> s\> 0.
\end{align}
In the invariant quantum state (which the system has if the matrix $A$ is Hurwitz, and the input fields are in the vacuum state, as discussed above),  the function (\ref{covXX}) depends only on the time difference:
\begin{equation}
\label{mho}
    \cov(X(t),X(s))
    =
    \left\{
    \begin{matrix}
      \re^{(t-s)A} \Gamma & {\rm if} &t\> s\\
      \Gamma \re^{(s-t)A^\rT}  & {\rm if} &s> t\\
    \end{matrix}
    \right.
    =:
    \mho(t-s),
    \qquad
    s,t\> 0.
\end{equation}
Here, similarly to (\ref{covX}),
\begin{equation}
\label{Gamma}
     \Gamma:= \alpha + \beta\cdot \mu_\infty - \mu_\infty\mu_\infty^\rT
\end{equation}
is the invariant quantum covariance matrix of the system variables. The corresponding spectral density $S: \mR\to \mC^{n\x n}$ is obtained by applying the Fourier transform to (\ref{mho}):
\begin{align}
\nonumber
    S(\omega)
    & :=
    \int_\mR \re^{-i\omega \tau} \mho(\tau)\rd \tau\\
\nonumber
    & =
    \int_0^{+\infty} \re^{- \tau(i\omega I_n-A)} \rd \tau \Gamma
    +
    \Gamma
    \int_{-\infty}^0 \re^{-\tau (i\omega I_n +A^\rT)} \rd \tau\\
\nonumber
    & =
    (i\omega I_n - A)^{-1} \Gamma - \Gamma (i\omega I_n + A^\rT)^{-1}\\
\nonumber
    & =
    (i\omega I_n - A)^{-1} (A\Gamma + \Gamma A^\rT) (i\omega I_n + A^\rT)^{-1} \\
\label{specden}
    & =
    -(i\omega I_n - A)^{-1} \Ups(i\omega I_n + A^\rT)^{-1}
    =
    S(\omega)^*\succcurlyeq  0,
    \qquad
    \omega \in \mR,
\end{align}
where $\Ups$ is a complex positive semi-definite Hermitian matrix of order $n$ given by the expectation \begin{equation}
\label{Ups}
    \Ups := \bE (B(X)\Omega B(X)^{\rT})
\end{equation}
over the invariant quantum state. In (\ref{specden}), we have also used the algebraic Lyapunov equation (ALE)
\begin{equation}
\label{GammaALE}
    A \Gamma + \Gamma A^\rT + \Ups = 0,
\end{equation}
which is obtained as follows.   
A combination of the QSDE (\ref{dX1}) with the ODE (\ref{EXdot}) implies that the centred quantum process
\begin{equation}
\label{Xc}
    \breve{X} := X - \mu
\end{equation}
satisfies the QSDE 
$    
    \rd \breve{X} = A\breve{X} + B(X)\rd W
$. 
From this QSDE and the quantum Ito lemma \cite{HP_1984,P_1992}, it follows that
\begin{align}
\nonumber
    \rd (\breve{X}\breve{X}^\rT)
    =&
    (\rd \breve{X}) \breve{X}^\rT + \breve{X} \rd \breve{X}^\rT + \rd \breve{X} \rd \breve{X}^\rT\\
\nonumber
    = &
    (A \breve{X} \rd t + B(X)\rd W)\breve{X}^\rT\\
\nonumber
    & +
    \breve{X}(\breve{X}^\rT A^\rT  \rd t + \rd W^\rT B(X)^\rT)\\
\nonumber
    & + B(X) \rd W \rd W^\rT B(X)^\rT\\
\nonumber
    =& (A \breve{X}\breve{X}^\rT + \breve{X}\breve{X}^\rT A^\rT + B(X) \Omega B(X)^\rT)\rd t\\
\label{dXX}
    & + B(X)(\rd W) \breve{X}^\rT +  \breve{X} (\rd W)^\rT B(X)^\rT,
\end{align}
where $\Omega$ is the Ito matrix of the quantum Wiener process $W$ from (\ref{dWdW_Omega_J_bJ}), and  the commutativity $[\rd W, X^\rT] = 0$ is also used. Since the input fields are assumed to be in the vacuum state, the last line of the QSDE (\ref{dXX}) describes its martingale part which does not contribute to the quantum average of the right-hand side. Hence, by averaging both sides of (\ref{dXX}),  it follows that the quantum covariance matrix $\cov(X) = \bE(\breve{X}\breve{X}^\rT)$ in (\ref{covX}), represented in terms of (\ref{Xc}),
satisfies the Lyapunov ODE
\begin{equation}
\label{covXdot}
    (\cov(X))^{^\centerdot}
    =   A \cov(X) + \cov(X) A^\rT + V(\mu). 
\end{equation}
Here, 
\begin{align}
\nonumber
    V(\mu)& := \bE (B(X)\Omega B(X)^\rT )\\
\nonumber
    & =
    4
    \bE((\Theta \cdot X)M^\rT \Omega M(\Theta \cdot X)^\rT)\\
\nonumber
    & =
    -4
    \sum_{j,k=1}^n
    \bE \Xi_{jk}
    \Theta_j
    M^\rT\Omega  M
    \Theta_k
    \\
\label{V}
    & =
    -4
    \sum_{j,k=1}^n
    \Big(
        \alpha_{jk}
        +
        \sum_{\ell=1}^n
        \beta_{jk\ell}\mu_\ell
    \Big)
    \Theta_j
    M^\rT\Omega M
    \Theta_k
\end{align}
is a complex positive semi-definite Hermitian matrix of order $n$, which  
is computed by combining (\ref{BX}) with (\ref{XXX}), (\ref{mu}), (\ref{Theta}) and depends on time through the mean vector $\mu$  for the system variables in  (\ref{musol}). The properties $V = V^* \succcurlyeq 0$ follow from  the fact that $B(X)$ consists of self-adjoint quantum variables and $\Omega = \Omega^* \succcurlyeq 0$. The invariant quantum covariance matrix $\Gamma$ in (\ref{Gamma})  is a steady-state solution of the ODE (\ref{covXdot}), with the matrix $V(\mu)$ in (\ref{V}) replaced  with its limit $\Ups = V(\mu_\infty)$ in (\ref{Ups}),  thus leading to the ALE (\ref{GammaALE}).

The recurrence relations of Theorem~\ref{th:mom}  for multi-point mixed moments are applicable to the development of methods for computing quadratic-exponential cost functionals and their growth rates. Similarly to open quantum harmonic oscillators \cite{VPJ_2018a}, such functionals can be employed as risk-sensitive robust performance criteria for quasilinear quantum systems.   However, this line of research is beyond the scope of the present paper, and a mean square  cost functional (as in (\ref{quadrate})) will be used  for a quantum filtering problem.

\section{Mean square optimal linear observer design}\label{sec:filt}

Consider the measure\-ment-based quantum filtering setup shown in Fig.~\ref{fig:filt}
\begin{figure}[htbp]
\unitlength=0.8mm
\linethickness{0.4pt}
\begin{picture}(130.00,25.00)
    \put(5,13){\makebox(0,0)[cc]{$W$}}
    \put(10,13){\vector(1,0){20}}
    \put(30,5){\framebox(20,15)[cc]{{}}}
    \put(30,7){\makebox(20,15)[cc]{{\small quantum}}}
    \put(30,3){\makebox(20,15)[cc]{{\small plant}}}
    \put(50,13){\vector(1,0){20}}
    \put(60,17){\makebox(0,0)[cc]{$Y$}}
    \put(70,5){\framebox(20,15)[cc]{{}}}
    \put(70,7){\makebox(20,15)[cc]{{\small measuring}}}
    \put(70,3){\makebox(20,15)[cc]{{\small device}}}
    \put(90,13){\vector(1,0){20}}
    \put(100,17){\makebox(0,0)[cc]{$Z$}}
    \put(110,5){\framebox(20,15)[cc]{{}}}
    \put(110,7){\makebox(20,15)[cc]{{\small classical}}}
    \put(110,3){\makebox(20,15)[cc]{{\small observer}}}
    \put(130,13){\vector(1,0){20}}
    \put(155,13){\makebox(0,0)[cc]{$\xi$}}
\end{picture}
\caption{A filtering setup for the quantum plant
with the output field $Y$, driven by the quantum Wiener process $W$ according to (\ref{dX1}), (\ref{dY1}), and a classical   linear observer with the measurement signal $Z$ at the input.}
\label{fig:filt}
\end{figure}
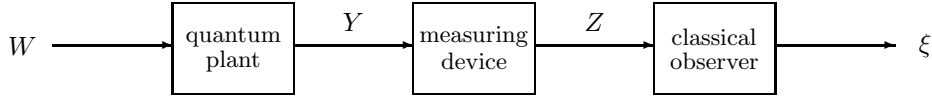
for a quantum plant modelled as in Theorem~\ref{th:QSDE}. The plant output field $Y$ is converted by a measuring device
to a  multichannel observation process $Z:=(Z_k)_{1\< k\< r}$
consisting of $r$ time-varying self-adjoint quantum variables, which are used by a classical linear observer in order to produce an $\mR^n$-valued estimate $\xi$ for the current plant variables. More precisely, the measurement is modelled by a static linear relation \cite{N_2014}
\begin{equation}
\label{ZY}
    Z
     = D Y,
\end{equation}
where $D\in \mR^{r\x m}$ is a constant matrix with $r\< \frac{m}{2}$ rows satisfying the conditions
\begin{equation}
\label{DD_DJD}
  F:= DD^{\rT}  \succ 0,
  \qquad
  DJD^{\rT}  = 0,
\end{equation}
the first of which is equivalent to $D$ being  of full row rank.
It follows from (\ref{ZY}) that the process $Z$ inherits the nondemolition property \cite{B_1983,B_1989}
\begin{equation}
\label{XZ}
        [X(t),Z(s)^{\rT}]
     =
     [X(t),Y(s)^{\rT}]D^\rT
     =
    0,
    \qquad
    t\> s\> 0,
\end{equation}
from the plant output field $Y$ in (\ref{XY}).  In view of (\ref{YYst}), the second condition in (\ref{DD_DJD}) implies that
\begin{equation}
\label{ZZst}
    [Z(s), Z(t)^{\rT}]
    =
    D[Y(s), Y(t)^{\rT}]D^\rT
    =
    2i\min(s,t)DJD^\rT
    =0,
    \qquad
    s,t\> 0.
\end{equation}
The relations (\ref{XZ}), (\ref{ZZst}) justify $Z$ as a nondemolition measurement process, whose
entries  $Z_1, \ldots, Z_r$ commute with future plant variables and between themselves at all times (and hence, are accessible to simultaneous continuous measurement). The process $Z$ is organised as a classical  Ito process \cite{KS_1991} with values in  $\mR^r$ and a real positive definite diffusion matrix
\begin{equation}
\label{DOD}
    D\Omega  D^\rT =DD^{\rT} +iDJD^{\rT}=F
\end{equation}
in view of (\ref{dWdW_Omega_J_bJ}), (\ref{DD_DJD}), so that
 $
    \rd Z \rd Z^\rT = D \rd Y \rd Y^\rT D^\rT= D \rd W \rd W^\rT D^\rT = D\Omega D^\rT \rd t = F \rd t
 $.
For any time $t\>0$, an operator-valued extension $f(X(t))$ of a  complex-valued function $f$ to the plant variables and the past observation history
\begin{equation}
\label{Zt}
    \fZ_t
    :=
    \{
        Z_1(s), \ldots, Z_r(s):\ 0\<s\< t
    \}
\end{equation}
form a set of pairwise commuting (and hence, compatible) quantum variables.
The commutative von Neumann algebra $\cZ_t$,  generated by the past observation history $\fZ_t$  in (\ref{Zt}), is the information available to the causal classical observer in Fig.~\ref{fig:filt} at time $t$.

We will now take into account the dynamics of the observation process $Z$ in (\ref{ZY}), which, in view of (\ref{dY1}), satisfies  the QSDE
\begin{equation}
\label{dZ}
  \rd Z = D \rd Y = (CX + d) \rd t + D \rd W,
\end{equation}
where $C\in \mR^{r\x n}$, $d\in \mR^r$ are related to the plant-field coupling parameters $M$, $N$  from (\ref{H_LM}) by
\begin{equation}
\label{Cd}
  C:= 2DJM,
  \qquad
  d := 2DJ N.
\end{equation}
This allows $Z$ to be used as an input to a Luenberger observer \cite{AM_1979}, whose state $\xi$  is a classical $\mR^n$-valued Ito process with respect to the filtration $\cZ:= (\cZ_t)_{t\> 0}$  governed by an SDE 
\begin{equation}
\label{dXhat}
  \rd \xi
  =
  (A\xi + b) \rd t + K (\rd Z- (C\xi + d)\rd t)
  =
    ((A-KC)\xi + b-Kd) \rd t + K \rd Z,
\end{equation}
initialised at the quantum average of the initial plant variables:
\begin{equation}
\label{xi0}
    \xi(0) := \bE X(0) = \mu(0).
\end{equation}
The observer is specified by a gain matrix $K \in \mR^{n\x r}$ which is a given continuous function of time,   not constrained by quantum physical realizability conditions in contrast to the coefficients of the plant QSDEs (\ref{dX1}), (\ref{dY1}). Although  $\xi(t)$ differs from the conditional expectation $\bE(X(t)\mid \cZ_t)$,  the quantity
\begin{equation}
\label{nov}
    \rd Z- (C\xi + d)\rd t
    =  (CX + d) \rd t + D \rd W - (C\xi + d)\rd t
     =
    Ce\rd t + D\rd W
\end{equation}
in (\ref{dXhat}) is similar to the innovation process increment in the Kalman filter \cite{LS_2001}. Here, (\ref{dZ}) is used along with the ``estimation error'' process
\begin{equation}
\label{e}
  e:= (e_k)_{1\< k \< n} :=  X-\xi
  =
  \begin{bmatrix}
    I_n & -I_n
  \end{bmatrix}
  \cX,
\end{equation}
which consists of $n$ time-varying self-adjoint quantum variables and is represented in terms of
the augmented vector
\begin{equation}
\label{cX}
  \cX
  :=
  {\begin{bmatrix}
    X\\
    \xi
  \end{bmatrix}}
\end{equation}
of the plant and observer variables. These dynamic variables satisfy the one-point CCRs
\begin{equation}
\label{cXCCR}
    [\cX, \cX^\rT]
    =
    \begin{bmatrix}
      [X,X^\rT] & [X,\xi^\rT] \\
      [\xi,X^\rT] & [\xi,\xi^\rT]
    \end{bmatrix}
    =
    \begin{bmatrix}
      2i \Theta \cdot X & 0\\
      0 & 0
    \end{bmatrix}
\end{equation}
and the QSDE
\begin{equation}
\label{dcX}
    \rd \cX =
    \Big(
        \begin{bmatrix}
          A & 0 \\
          KC & A-KC
        \end{bmatrix}
        \cX
        +
        \begin{bmatrix}
          b\\
          b
        \end{bmatrix}
    \Big)
    \rd t
    +
    \begin{bmatrix}
      B(X)\\
      KD
    \end{bmatrix}
    \rd W
\end{equation}
which is obtained by combining (\ref{dX1}), (\ref{dXhat}), (\ref{nov}) with (\ref{cX}).
In view of the identity
$$
      \begin{bmatrix}
    I_n & -I_n
  \end{bmatrix}
            \begin{bmatrix}
          A & 0 \\
          KC & A-KC
        \end{bmatrix}
        =
        (A-KC)
  \begin{bmatrix}
    I_n & -I_n
  \end{bmatrix} ,
$$
it follows from (\ref{dcX}) that the process $e$ in  (\ref{e}) satisfies the QSDE
\begin{equation}
\label{de}
  \rd e
  =
    \begin{bmatrix}
    I_n & -I_n
  \end{bmatrix}
  \rd \cX
  =
  (A-KC) e \rd t + (B(X)-KD)\rd W.
\end{equation}
Since the input quantum fields are in the vacuum state, the averaging of the QSDE (\ref{de}) leads to the ODE
$
    (\bE e)^{^\centerdot} = (A-KC) \bE e
$,
with the initial condition $\bE e(0) = \bE X(0) - \mu(0) = 0$ in view of (\ref{xi0}). Therefore, $\bE e(t)=0$ for all $t\> 0$, so that $\xi$ is an unbiased estimator for the vector $X$ of the plant variables.
Furthermore, (\ref{cXCCR}) implies that $e$ inherits 
the one-point CCRs (\ref{XCCRTheta}):
\begin{equation}
\label{eecomm}
    [e,e^\rT]
    =
    \begin{bmatrix}
    I_n & -I_n
  \end{bmatrix}
  [\cX,\cX^\rT]
      \begin{bmatrix}
    I_n \\ -I_n
  \end{bmatrix}=
    [X,X^\rT] = 2i\Theta \cdot X,
\end{equation}
which do not depend on the gain matrix $K$ in (\ref{dXhat}). The averaging of (\ref{eecomm}) allows the quantum covariance matrix of $e$ to be split as
\begin{equation}
\label{Eee}
    G:= \cov(e) = \bE(ee^\rT) = P + \frac{1}{2} \bE[e,e^\rT] = P + i\Theta \cdot \mu,
\end{equation}
where
\begin{equation}
\label{P}
    P:= \Re G.
\end{equation}
Note that $G$ in (\ref{Eee}) is a positive semi-definite Hermitian  matrix whose antisymmetric imaginary part does not contribute to
\begin{equation}
\label{SPS}
    \Tr (SGS^\rT)  = \Tr (SPS^\rT). 
\end{equation}
This trace quantifies the mean square error for the vector $S \xi$ as an unbiased estimator of the vector $SX$ related to the plant variables by  a fixed but otherwise arbitrary weighting matrix $S \in \mR^{\nu \x n}$. In particular, 
\begin{equation}
\label{Ptrace}
    \Tr P  = \Tr \bE (ee^\rT) = \bE \sum_{k=1}^n e_k^2.
\end{equation}

\begin{theorem}
\label{th:PALE}
Suppose the quantum plant is described by Theorem~\ref{th:QSDE}, the   plant-field state is given by (\ref{rho}), and the input quantum fields are in the vacuum state $\ups$. Also, let the  Luenberger observer be specified by (\ref{ZY})--(\ref{xi0}). Then the matrix $P$ in (\ref{P}) for the estimation error $e$ in (\ref{e}) satisfies the Lyapunov ODE
\begin{equation}
\label{Pdot}
    \dot{P}
    = (A-KC)P + P(A-KC)^\rT + \Sigma(\mu)  - KD B(\mu)^\rT - B(\mu)D^\rT K^\rT + K F K^\rT,
\end{equation}
initialised at
\begin{equation}
\label{P0}
  P(0) = \Re \cov(X(0)) = \alpha + \Re \beta\cdot \mu(0) - \mu(0)\mu(0)^\rT.
\end{equation}
Here,
\begin{equation}
\label{Sigma0}
    \Sigma(\mu) :=
    -4
    \sum_{j,k=1}^n
    \Theta_j
    M^\rT
    \Big(
        \alpha_{jk} I_m
        +
        \sum_{\ell=1}^n
        \mu_\ell
        (\Re \beta_{jk\ell} I_m - \theta_{jk\ell}J)
    \Big)
    M
    \Theta_k
\end{equation}
is a real positive semi-definite symmetric matrix, which depends on time through the mean vector $\mu$  for the plant variables in  (\ref{musol}). \hfill$\square$
\end{theorem}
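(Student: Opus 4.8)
The plan is to mimic the derivation of the covariance ODE (\ref{covXdot}) and the ALE (\ref{GammaALE}), now applied to the error process $e$ governed by the QSDE (\ref{de}) rather than to the centred plant process $\breve X$ in (\ref{Xc}). Since $\bE e(t)=0$ has already been established, $e$ is automatically centred and $G=\bE(ee^\rT)=\cov(e)$ directly. First I would apply the quantum Ito lemma to $ee^\rT$, exactly as in the computation leading to (\ref{dXX}), obtaining
\[
\rd(ee^\rT) = \big((A-KC)ee^\rT + ee^\rT(A-KC)^\rT + (B(X)-KD)\Omega(B(X)-KD)^\rT\big)\rd t + \mathcal{M},
\]
where $\Omega$ is the Ito matrix from (\ref{dWdW_Omega_J_bJ}), the correction term comes from $\rd W\,\rd W^\rT=\Omega\,\rd t$, and $\mathcal{M}$ collects the martingale increments $(B(X)-KD)(\rd W)e^\rT$ and their transposes (using $[\rd W,X^\rT]=0$). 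Because the input fields are in the vacuum state, $\mathcal{M}$ does not contribute to the quantum average, so averaging yields a closed ODE for the complex Hermitian matrix $G$.

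Next I would evaluate the averaged diffusion term. By the linearity of $B$ in (\ref{BX}) we have $\bE B(X)=B(\mu)$, and expanding the bracket gives
\[
\bE\big((B(X)-KD)\Omega(B(X)-KD)^\rT\big) = V(\mu) - B(\mu)\Omega D^\rT K^\rT - KD\Omega B(\mu)^\rT + KFK^\rT,
\]
where $V(\mu)=\bE(B(X)\Omega B(X)^\rT)$ is the matrix computed in (\ref{V}) and $KD\Omega D^\rT K^\rT=KFK^\rT$ by (\ref{DOD}). To descend to $P=\Re G$ I would take the real part of both sides, using the splitting $G=P+i\Theta\cdot\mu$ from (\ref{Eee}) and the fact that $A$, $K$, $C$, $B(\mu)$, $D$, $F$ are real: the terms $(A-KC)G$ and $G(A-KC)^\rT$ contribute $(A-KC)P$ and $P(A-KC)^\rT$, while $\Omega=I_m+iJ$ shows that $\Re(B(\mu)\Omega D^\rT K^\rT)=B(\mu)D^\rT K^\rT$ and $\Re(KD\Omega B(\mu)^\rT)=KDB(\mu)^\rT$ (the $J$-parts being purely imaginary). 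This reproduces (\ref{Pdot}) once the identification $\Sigma(\mu)=\Re V(\mu)$ is made.

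The one genuinely computational step, and the main obstacle, is verifying that $\Re V(\mu)$ equals the explicit matrix $\Sigma(\mu)$ in (\ref{Sigma0}). For this I would substitute $\Omega=I_m+iJ$ and the decomposition $\beta_{jk\ell}=\Re\beta_{jk\ell}+i\theta_{jk\ell}$ (from $\Theta=\Im\beta$ in (\ref{Theta})) into (\ref{V}) and use $\Im\alpha=0$ from (\ref{Imalpha0}). Since the surrounding factors $\Theta_j M^\rT(\cdot)M\Theta_k$ are real, the real part passes inside, and the real part of the scalar coefficient $(\alpha_{jk}+\sum_\ell\beta_{jk\ell}\mu_\ell)(I_m+iJ)$ equals $\alpha_{jk}I_m+\sum_\ell\mu_\ell(\Re\beta_{jk\ell}I_m-\theta_{jk\ell}J)$, which matches (\ref{Sigma0}) termwise. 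That $\Sigma(\mu)$ is real symmetric positive semi-definite then follows because $V(\mu)=V(\mu)^*\succcurlyeq0$, as established after (\ref{V}), and the real part of a Hermitian positive semi-definite matrix is symmetric positive semi-definite.

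Finally, for the initial condition I would use $\xi(0)=\mu(0)$ from (\ref{xi0}), so that $e(0)=X(0)-\mu(0)=\breve X(0)$ and hence $G(0)=\cov(e(0))=\cov(X(0))$ by (\ref{covX}); taking the real part and again using $\Im\alpha=0$ gives $P(0)=\alpha+\Re\beta\cdot\mu(0)-\mu(0)\mu(0)^\rT$, which is (\ref{P0}).
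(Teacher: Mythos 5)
Your proposal is correct and follows essentially the same route as the paper's proof: apply the quantum Ito lemma to $ee^\rT$ as in the derivation of (\ref{dXX}), average over the vacuum state so the martingale part drops out, expand the diffusion term using $\bE B(X)=B(\mu)$ and (\ref{DOD}), take the real part with $\Re\Omega=I_m$ to pass from $G$ to $P$, identify $\Sigma(\mu)=\Re V(\mu)$ termwise via $\Omega=I_m+iJ$ and $\beta_{jk\ell}=\Re\beta_{jk\ell}+i\theta_{jk\ell}$, and obtain (\ref{P0}) from (\ref{xi0}) and (\ref{covX}). No gaps; your verification of the explicit form (\ref{Sigma0}) is exactly the computation the paper carries out in its equation for $\Sigma(\mu)$.
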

\begin{proof}
Similarly to (\ref{dXX}), 
a combination of the quantum Ito lemma 
with the QSDE (\ref{de}) leads to 
\begin{align}
\nonumber
    \rd (ee^\rT)
    =&
    (\rd e) e^\rT + e \rd e^\rT + \rd e \rd e^\rT\\
\nonumber
    = &
    ((A-KC) e \rd t + (B(X)-KD)\rd W)e^\rT\\
\nonumber
    & +
    e(e^\rT (A-KC)^\rT  \rd t + \rd W^\rT (B(X)-KD)^\rT)\\
\nonumber
    & + (B(X)-KD) \rd W \rd W^\rT (B(X)-KD)^\rT\\
\nonumber
    =& ((A-KC) ee^\rT + ee^\rT(A-KC)^\rT + (B(X)-KD) \Omega (B(X)-KD)^\rT)\rd t\\
\label{dee}
    & + (B(X)-KD)(\rd W) e^\rT +  e (\rd W)^\rT (B(X)-KD)^\rT.
\end{align}
Since the input fields are in the vacuum state, the martingale part on the last line of the QSDE (\ref{dee}) does not contribute to its averaging, which yields the following Lyapunov ODE for the matrix $G$ in (\ref{Eee}):
\begin{align}
\nonumber
    \dot{G}
    = &  (A-KC)G + G(A-KC)^\rT + \bE ((B(X)-KD)\Omega (B(X)-KD)^\rT )\\
\nonumber
    = &  (A-KC)G + G(A-KC)^\rT + \bE (B(X)\Omega B(X)^\rT)\\
\label{Gdot}
     & - KD\Omega B(\mu)^\rT - B(\mu)\Omega D^\rT K^\rT + K F K^\rT,
\end{align}
due to (\ref{DOD}). We have also used the linearity of the map $B$ in (\ref{BX}), leading to $\bE B(X) = B(\mu) \in \mR^{n\x m}$ in view of (\ref{mu}).
 The ODE (\ref{Pdot}) for the matrix $P$ in (\ref{P}) is obtained by taking the real part of (\ref{Gdot}) and using 
$\Re \Omega = I_m$ from (\ref{dWdW_Omega_J_bJ}) along with the matrix
\begin{align}
\nonumber
    \Sigma(\mu)
    & := \Re V(\mu)
    \\
\nonumber
    & =
    -4\Re
    \sum_{j,k=1}^n
    \Big(
        \alpha_{jk}
        +
        \sum_{\ell=1}^n
        \beta_{jk\ell}\mu_\ell
    \Big)
    \Theta_j
    M^\rT\Omega M
    \Theta_k\\
\label{Sigma}
    & =
    -4
    \sum_{j,k=1}^n
    \Theta_j
    M^\rT
    \Big(
        \alpha_{jk} I_m
        +
        \sum_{\ell=1}^n
        \mu_\ell
        (\Re \beta_{jk\ell} I_m - \theta_{jk\ell}J)
    \Big)
    M
    \Theta_k,
\end{align}
which is associated with (\ref{V}) and satisfies $\Sigma = \Sigma^\rT \succcurlyeq 0$ as the real part of $V = V^*\succcurlyeq 0$. 
The relation (\ref{Sigma}) establishes (\ref{Sigma0}). The initial condition (\ref{P0}) follows from (\ref{xi0}), (\ref{covX}).
\end{proof}

The time-varying gain matrix $K$ plays the role of a free parameter
which can be chosen so as to minimise the solution $P$ of the Lyapunov ODE (\ref{Pdot}) in the sense of the positive semi-definite matrix ordering, thereby minimising the quantity (\ref{SPS}) (for any fixed but otherwise arbitrary weighting matrix $S$) and its particular version (\ref{Ptrace}). To this end, note that the matrix  $\Sigma(\mu)$ in (\ref{Sigma0}) does not depend on $K$, and hence, the right-hand side of (\ref{Pdot})  is a quadratic function of $K$, which admits a completion of the square:
\begin{align}
\nonumber
    \dot{P}
    =&
    AP + PA^\rT + \Sigma(\mu)\\
\nonumber
    & -K (CP + DB(\mu)^\rT) - (PC^\rT + B(\mu)D^\rT)K^\rT +KF K^\rT\\
\nonumber
    =&
    AP + PA^\rT + \Sigma(\mu) - K_* F K_*^\rT\\
\nonumber
    & +
    (K-K_*)F (K-K_*)^\rT\\
\label{PdotK}
    \succcurlyeq &
    AP + PA^\rT + \Sigma(\mu) - K_* F K_*^\rT,
\end{align}
where
\begin{equation}
\label{K*}
    K_*
    :=
      (PC^\rT + B(\mu)D^\rT)F^{-1}.
\end{equation}
By a standard monotonicity argument, the minimal  solution $P$ corresponds to $K$ which  minimises the right-hand side of (\ref{PdotK}) at every moment of time. Since $F\succ 0$ in view of (\ref{DD_DJD}),   the minimum is achieved only at $K= K_*$. The resulting Riccati ODE
\begin{equation}
\label{RODE}
  \dot{P} = AP + PA^\rT + \Sigma(\mu) - (PC^\rT + B(\mu)D^\rT)F^{-1}(CP + DB(\mu)^\rT),
\end{equation}
with the initial condition $P(0)$ in  (\ref{P0}) and $\mu$, $\Sigma$ computed according to (\ref{musol}), (\ref{Sigma}), defines the optimal gain matrix (\ref{K*}) for the Luenberger observer (\ref{dXhat}). In the case when $A$ is Hurwitz, the steady-state version of the mean square optimal  observer is obtained by substituting $\mu_\infty$ from (\ref{mu*}) into (\ref{Sigma}) and finding a unique stabilising solution $P_\infty$ of the algebraic Riccati equation
\begin{equation}
\label{ARE}
  AP_\infty + P_\infty A^\rT + \Sigma(\mu_\infty) - (P_\infty C^\rT + B(\mu_\infty)D^\rT)F^{-1}(CP_\infty + DB(\mu_\infty)^\rT) = 0,
\end{equation}
which yields the steady-state gain matrix
\begin{equation}
\label{Kinf}
    K_\infty
    :=
      (P_\infty C^\rT + B(\mu_\infty)D^\rT)F^{-1}.
\end{equation}
The stabilising property is understood in the usual sense \cite{AM_1979,LS_2001} that the matrix $A-K_\infty  C$ is Hurwitz. In view of the structure of the Riccati equations and the gain matrices in (\ref{K*})--(\ref{Kinf}),  the corresponding Luenberger observer (\ref{dXhat}) is similar to the classical Kalman filter. This steady-state filtering regime uses the Hurwitz property of the matrix $A$ from (\ref{A}), whose verification is illustrated by the following example.

\section{Example: Pauli matrices as initial quantum plant variables}\label{sec:Pauli}

Suppose the quantum plant has $n=3$ dynamic variables, which are organised initially as the Pauli matrices (\ref{X123}):
\begin{equation}
\label{XPauli}
    X_k(0) = \sigma_k,
    \qquad
    k = 1,2,3,
\end{equation}
on the Hilbert space $\fH_0:= \mC^2$. Although the   quantum stochastic flow (\ref{uni}) complicates  the nature of the subsequent plant variables, they retain the algebraic structure of the Pauli  matrices over the course of time in accordance with  (\ref{XXuni}). 
Substitution of the corresponding structure constants, specified at the end of Section~\ref{sec:var}, into (\ref{A}) yields
\begin{equation}
\label{APauli}
    A
     =
        2
        \Theta \diam (E + M^\rT JN)
    +
    2
    \sum_{\ell = 1}^3
    \Theta_\ell
    M^\rT
        M
            \Theta_\ell,
\end{equation}
where $M \in \mR^{m\x 3}$ is the plant-field coupling matrix from (\ref{H_LM}).  Here,
use is made of the invariance of the Levi-Civita symbol $\eps_{jk\ell}$ under cyclic permutations of its indices, whereby the matrices (\ref{T123}) satisfy
$$
    \Theta_\ell = \Theta_{\ell\bullet \bullet},
    \qquad
    \ell = 1,2,3.
$$
The relation of the array $\Theta$ to the Levi-Civita symbol in this case implies that the first and second terms in (\ref{APauli}) are the antisymmetric and symmetric parts of the matrix $A$, respectively, and hence,
\begin{equation}
\label{AA}
    A+A^\rT =     4
    \sum_{\ell = 1}^3
    \Theta_\ell
    M^\rT
        M
            \Theta_\ell.
\end{equation}
The antisymmetry of the matrix $
        \Theta \diam (E + M^\rT JN)
$ in (\ref{APauli}) follows from the identity
$$
    \Theta \diam u
    =
    \begin{bmatrix}
      \Theta_1 u &
      \Theta_2 u &
    \Theta_3 u
    \end{bmatrix}
    =
    \begin{bmatrix}
      0 & -u_3 & u_2\\
      u_3 & 0 & -u_1\\
      -u_2 & u_1 & 0
    \end{bmatrix},
    \qquad
    u:= (u_k)_{1\< k\< 3} \in \mR^3,
$$
in view of (\ref{diam}), (\ref{T123}),
due to which $(\Theta\diam u)v$ is the cross product of vectors $u,v\in \mR^3$. Furthermore,
\begin{equation}
\label{MM}
    \Lambda
    :=
        -\sum_{\ell = 1}^3
    \Theta_\ell
    M^\rT
        M
            \Theta_\ell
            =
            \|M\|_\rF^2 I_3 - M^\rT M
\end{equation}
is a real positive semi-definite symmetric matrix of order $3$. Its spectrum is given  by
\begin{equation}
\label{spec}
    \{\lambda_1+\lambda_2,\, \lambda_1+\lambda_3,\, \lambda_2+\lambda_3 \},
\end{equation}
where $\lambda_1, \lambda_2, \lambda_3\> 0$  are the eigenvalues of the $(3\x 3)$-matrix $M^\rT M \succcurlyeq 0$. Therefore, if $M$ satisfies the rank condition
\begin{equation}
\label{rankM}
    \mathrm{rank }M \> 2,
\end{equation}
then at most one of the eigenvalues of $M^\rT M$ is zero, and hence, the spectrum (\ref{spec}) is all strictly positive, thus implying that $\Lambda\succ 0$. The algebraic Lyapunov inequality
$$
    A+A^\rT = -4\Lambda \prec 0,
$$
obtained by combining (\ref{AA}) with (\ref{MM}), leads to the matrix $A$ being Hurwitz. This makes (\ref{rankM}) a sufficient condition for the asymptotic stability of the quantum plant with the Pauli matrices (\ref{XPauli}).

\section{Conclusion}\label{sec:conc}

We have considered a class of open quantum stochastic systems whose Hamiltonian and coupling operators are linear and affine functions of dynamic variables with algebraic properties, similar to and extending those of the Pauli matrices. The linearity of the drift vector and the dispersion matrix in the resulting quasilinear  QSDE gives rise to tractable dynamics of mean values
and higher-order multi-point moments of the system variables in the case of vacuum input fields. This also allows the invariant quantum state to be studied through the method of moments and makes quadratic and more general cost functionals with Lur'e type nonlinearities,  and their growth rates,  effectively computable for such systems. 
A mean square optimal measurement-based filtering problem for quasilinear quantum plants has been solved  in a class of Luenberger observers, leading to a Kalman-like quantum filter.  In regard to a steady-state filtering regime, a rank condition on the coupling matrix has been obtained for stability of quantum plants with the Pauli matrices as initial variables.  These results  can also be extended to quantum feedback control problems with applications to physical settings which involve interaction of particle spins with  electromagnetic fields.

\end{document}